\newcommand{\destBrib}{\textsc{DB}\xspace}
\newcommand{\negative}{\textsc{negative}\xspace}
\newcommand{\weigh}{\textsc{weighted}\xspace}
\newcommand{\Knapsack}{\textsc{Knapsack}\xspace}
\newcommand{\NP}[0]{$\mathcal{NP}$\xspace}
\newcommand{\Pclass}[0]{$\mathcal{P}$\xspace}
\newcommand{\NAEdreiSAT}{\textsc{NAE-3SAT}\xspace}
\newcommand{\DreiSAT}{\textsc{3SAT}\xspace}
\newcommand{\budget}{\beta\xspace}
\newcommand{\Issue}{X\xspace}
\newcommand{\issue}{x\xspace}
\newcommand{\etal}{\textit{et al.}\xspace}
\newcommand{\NPcResult}[1]{\hyperref[#1]{$\mathcal{NP}$\textbf{-c}}}
\newcommand{\PResult}[1]{\hyperref[#1]{$\mathcal{P}$}}
\newcommand{\bigO}[1]{\mathcal{O}(#1)\xspace}
\newcommand{\IV}{IV\xspace}
\newcommand{\DV}{DV\xspace}
\newcommand{\IDV}{IV+DV\xspace}
\newcommand{\OP}{OP\xspace}
\newcommand{\OK}{OK\xspace}
\newcommand{\OKeff}{OK^\textit{eff}\xspace}
\newcommand{\OV}{OV\xspace}
\newcommand{\SM}{SM\xspace}
\newcommand{\Cany}{C_\textsc{any}\xspace}
\newcommand{\Cequal}{C_\textsc{equal}\xspace}
\newcommand{\Clevel}{C_\textsc{level}\xspace}
\newcommand{\Cflip}{C_\textsc{flip}\xspace}
\newcommand{\Cdist}{C_\textsc{dist}\xspace}
\newcommand{\satc}{\gamma\xspace}
\newcommand{\satm}{\mathfrak{m}\xspace}
\newcommand{\satv}{\nu\xspace}
\newcommand{\satn}{\mathfrak{n}\xspace}
\newtheorem{theorem}{Theorem}
\newtheorem{corollary}[theorem]{Corollary}
\begin{document}

\title{Often harder than in the Constructive Case: Destructive Bribery in CP-nets}

\author[1]{Britta Dorn}
\author[2]{Dominikus Krüger}
\author[2]{Patrick Scharpfenecker\thanks{Supported by DFG grant TO 200/3-1.}}

\affil[1]{Faculty of Science, Dept.~of Computer Science,
			University of Tübingen,
			Germany, 
			\texttt{britta.dorn@uni-tuebingen.de}}

\affil[2]{Institute of Theoretical Computer Science,
			Ulm University,
			Germany,
			\texttt{\{dominikus.krueger,patrick.scharpfenecker\}@uni-ulm.de}}

\date{}

\maketitle

\begin{abstract}
We study the complexity of the destructive bribery problem---an external agent tries to prevent a disliked candidate from winning by bribery actions---in voting over combinatorial domains, where the set of candidates is the Cartesian product of several issues. This problem is related to the concept of the margin of victory of an election which constitutes a measure of robustness of the election outcome and plays an important role in the context of electronic voting. In our setting, voters have conditional preferences over assignments to these issues, modelled by CP-nets. 
We settle the complexity of all combinations of this problem based on distinctions of four voting rules, five cost schemes, three bribery actions, weighted and unweighted voters, as well as the negative and the non-negative scenario. We show that almost all of these cases are \NP-complete or \NP-hard for weighted votes while approximately half of the cases can be solved in polynomial time for unweighted votes. 
\end{abstract}

\section{Introduction}
Voting in an election is a common procedure to aggregate the preferences of the parties involved, the voters, over a set of alternatives, the
candidates, in order to find one or more winning alternatives. In many settings, the set of candidates is the Cartesian product of several issues. One might think of a referendum, where voters have to approve or disapprove of each issue, or the individual configuration of a product consisting of several components for each of which several options can be chosen, such as a car where the consumer can choose between different options for the model,
equipment, color, and various other features, or a computer where
different options are available for the operation system, hardware and
software components.
The number of possible candidates (available choices, outcomes) is
hence exponential in the number of issues or components, and it
may be an impossible task for voters to express their preferences over
the whole set of available choices by ranking them all. 

Additionally, voters might have conditional preferences over the
candidates. The typical example is a meal consisting of several components, such as a main dish (fish or meat), a side dish (chips or rice), and a drink (beer or wine). A voter might (unconditionally) prefer meat to fish, and he might prefer wine to beer given that fish is served for the main dish. In the car example, a consumer might prefer a station wagon to a hatchback, and he might prefer a black car to a white one, but only if it is equipped with an air conditioning system.

In view of applications such as e-commerce and other settings on the
web and internet where one has to deal with very large populations,
one is interested in a compact description and efficient communication
and aggregation of these conditional preferences in combinatorial
domains. One approach is given by CP-nets,
a graphical model introduced by Boutilier~\textit{et al.}~\cite{boutilier2004CPnets} that incorporates {\it ceteris paribus} (cp) statements describing the conditional dependencies. Preference aggregation in CP-nets was studied by Rossi~\textit{et al.}~\cite{rossi2004mcp} and various other authors (e.g.,~\cite{purrington2007making,xia2008voting,conitzer2011hypercubewise}).

Besides the problem of determining a winner, a central topic in the computational social choice literature is the study of the computational complexity of voting problems such as strategic voting (manipulation), election control and bribery (\cite{brandt2013comsoc,faliszewski2009richer}). In the bribery problem, initially introduced by Faliszewski~\textit{et al.}~\cite{faliszewski2009hard} (see also~\cite{faliszewski2008nonuniform,faliszewski2009llull,elkind2009swap}), voters can be bribed to change their preferences. In the {\it constructive} bribery problem, one asks whether a briber can make his favorite candidate win the election with these changes, subject to a budget constraint. 

Mattei~{\it et al.}~\cite{mattei2013bribery} considered several procedures for determining a winner in vo\-ting with CP-nets and investigated the bribery problem in this context. They introduced and adapted several natural cost schemes for the bribery problem in the setting of CP-nets and determined the computational complexity of the problem under the various voting rules and cost schemes, also considering the level of dependency the briber can affect with his changes. In most of these cases, they obtained that the bribery problem is solvable in polynomial time. Dorn and Kr\"uger~\cite{dorn2014hardness} answered open cases and considered the weighted and negative versions of the problem. 
Further investigations of bribery in CP-nets deal with interaction and influence among voters~\cite{maran2013framework} and with representation of the voters' preferences via soft constraints~\cite{pini2013bribery}.

In this work, we study the complexity of the \textit{destructive} bribery problem in CP-nets, which asks whether a disliked candidate can be prevented from winning the election by bribery actions. 
The study of destructive bribery is also related to the concept of the {\it margin of victory} (\cite{magrino2011margin,xia2012margin,reisch2014margin}) of an election. Given a voting rule and a set of votes, the margin of victory is the minimum number of votes that must be modified in order to change the winner(s) of the election. If the voting rule selects a unique winner, then the problem of deciding whether this number is larger than a given threshold corresponds to the destructive bribery problem introduced by Faliszewski {\it et al.}~\cite{faliszewski2009hard}. The margin of victory is a measure of robustness of the outcome of an election, specifying the number of errors that may occur in an election---be it due to inference or due to fraud---without having an effect on the outcome. It is of particular interest in the setting of electronic voting where post-election audits are executed to verify the correctness of the electronical record (\cite{norden2007postelectionaudits}). An audit samples ballots and measures the discrepancy of the sampled electronic votes with respect to their paper record. Risk-limiting post-election audits seek to minimize the size of the audit when the outcome is correct (\cite{stark2009risk}). The margin of victory is an important parameter used to determine the size of an audit for this method.

We study all combinations of voting rules, cost schemes, and bribery actions considered by Mattei~{\it et al.}~\cite{mattei2013bribery}, as well as weighted voters and the negative scenario. 
 The destructive variant has been investigated in various voting problems~\cite{conitzer2003many,conitzer2002complexity,hemaspaandra2007anyone}, including bribery~\cite{faliszewski2009llull} without combinatorial domains. In all these settings, for the unique-winner case, the destructive version is at most as hard as the constructive one. 
We think that our work might be interesting for several reasons: First, in our setting, destructive bribery turns out to be harder than constructive bribery in many cases. Second, the problems we use for our reductions (two variants of the {\sc Satisfiability} problem and the \Knapsack problem) are not the typical ones that are often used in the context of voting problems. 
 An overview of our results is given in Table~\ref{table:results} on page~\pageref{table:results}.

\section{Preliminaries}

Almost all our notations and definitions can be found in greater detail and exemplified in the articles by Mattei~\etal~\cite{mattei2013bribery} and by Dorn and Krüger~\cite{dorn2014hardness}, who analyzed the constructive case of the same scenario.

This section is structured as follows. First, we present the \NP-complete problems we use for our reductions. Afterwards we define CP-nets and introduce related notation. This is followed by the introduction of the voting rules we will work with. We are then ready to define the bribery problem in the  setting of CP-nets and introduce the different cost schemes and allowed bribery actions. Finally, we give an overview (Table~\ref{table:results}) of the results obtained in this paper and close this section with an example.

\noindent For our reductions we use the following \NP-complete problems.

\begin{quote}
\textsc{(Not-All-Equal) 3-Satisfiability}, (\textsc{NAE}-)\DreiSAT \cite{GareyJohnson}\newline
\textbf{Given:} A set $U$ of $\satn$ variables $\satv_i$, collection $C$ of $\satm$ clauses over $U$ such that each clause $\satc\in C$ is a subset of $U$ with $\lvert \satc\rvert = 3$.\newline
\textbf{Question:} Is there a truth assignment for $U$ such that each clause in $C$ has at least one true (and one false) literal?
\end{quote}

\begin{quote}
\Knapsack~\cite{GareyJohnson}\newline
\textbf{Given:} A set $U$ of $n$ objects $(w_i, v_i) \in \mathbb{N}^2$ of weight $w_i\in\mathbb{N}$ and value $v_i\in\mathbb{N}$, positive integers $k,b\in\mathbb{N}$.\newline
\textbf{Question:} Is there a subset $U'\subseteq U$ of objects with total weight at most $b$ and total value at least $k$?
\end{quote}

\paragraph*{CP-nets}
In our setting, we are given a set of $m$ \textit{issues}~$M=\{\Issue_1, \dots, \Issue_m\}$, and each $\Issue_i\in M$ has a binary \textit{domain} $D(\Issue_i) = \{\issue_i, \overline{\issue_i}\}$. A complete assignment to all issues is called a {\it candidate}, so there are $2^m$ different candidates. Each of the $n$~voters has (possibly) conditional preferences over the values assigned to the issues; if the preference of an issue~$X$ depends on one or more other issues (called the \textit{parents} $Pa(X)$), we call this issue {\it dependent}, and {\it independent} otherwise. Formally, a CP-net is defined by a directed graph (with the issues as its vertices and directed edges going from $Pa(X)$ to $X$) modeling these dependencies, and a table for each issue, containing the preference over the assignment to this issue for each different complete assignment to its parents; each combination of an assignment to the parents and the corresponding preference over the issue is called a \textit{cp-statement}. For example, for a CP-net with issues $X$ and $Y$, the cp-statement $x > \overline{x}$ means that the assignment $X=x$ is unconditionally preferred to $X=\overline{x}$, while the statements $x: \overline{y} > y$ and $\overline{x}: y > \overline{y}$ express that the assignment $Y=\overline{y}$ is only preferred to $Y=y$ in the case that $X=x$ (hence, $Pa(Y) = \{X\}$ here). 
The collection of CP-nets of all voters is called a \emph{profile}.

CP-nets only define a partial order over the candidates, i.e., some candidates are incomparable. One way to expand this to strict total orders over the candidates is to give a strict total order over all issues such that no issue depends on any issue following it in this order. If the same order~$\mathcal{O}$ works out for all CP-nets of a profile, the profile is called $\mathcal{O}$-\emph{legal}~\cite{lang2007vote}.

Throughout this work, we assume that the voters' preferences on the set of issues are given by \textit{compact} (the number of parents of each issue is bounded by a constant) and \textit{acyclic} (the corresponding graph is acyclic) CP-nets. For acyclic CP-nets, the most preferred candidate of a voter can be determined efficiently~\cite{boutilier2004CPnets}. 

An example of a profile consisting of three CP-nets is given in Table~\ref{table:example} at the end of this section. The CP-nets encode conditional
preferences for the alternative options of a menu consisting of a main
dish, a side dish and a drink. Alice's choice for the drink is
dependent of the choice for the main dish: She prefers beer to wine in
case meat is served, and wine to beer if fish is served as a main
dish.

\paragraph*{Voting} 
A voting rule maps a profile to a set of candidates. 
With {\bf One-Step-$k$-Approval ($OK$)}, only the~$k$ most preferred candidates of each voter obtain $1$~point each. The winner of the election is the candidate with the most points (or all candidates with those points). In particular, we consider the special cases {\bf One-Step-Plurality ($OP$)}, where $k=1$, and {\bf One-Step-Veto ($OV$)}, where $k=2^m-1$. 
With {\bf Sequential majority ($SM$)}, 
given a total order $\mathcal{O}$ for which the profile is $\mathcal{O}$-legal, we follow this order issue by issue, and execute a majority vote for each issue. The voters fix the winning value of the corresponding issue in their CP-net and then go on to the next issue. The winning candidate is the combination of the winners of the individual steps taken. These rules are also used by Mattei~\textit{et al.}~\cite{mattei2013bribery}.

Interestingly, it is \NP-hard to determine the winner for the voting rule One-Step-$k$-Approval ($\OK$) if~$k$ is part of the input. Therefore, we restrict our analysis to efficient cases of $\OK$ where $k$ has a value which is polynomial in $n$ and $m$ or for $\mathcal{O}$-legal profiles where $k$ is a power of $2$. We denote these cases by $\OKeff$. Here, the winner can be determined in polynomial time using results by Brafman~\etal~\cite[Theorem 9]{brafman2010finding} and Mattei~\etal~\cite[Lemma 1]{mattei2013bribery}:

\begin{enumerate}
	\item If $k$ is polynomial in $n$ and $m$, then by a result of Brafman~\etal~\cite[Theorem 9]{brafman2010finding}, for a given acyclic CP-net, an order of the issues, and a candidate, it is possible to find the next best candidate in polynomial time. Hence, for every voter, we can list the candidates ranked on the first~$k$ positions in polynomial time and output the candidate with the maximum score.
	\item For $k = 2^j$ with $j \in \mathbb{N}$ and a global order of the issues, Mattei~\etal~\cite[Lemma 1]{mattei2013bribery} state that only the first $m-j$ issues are relevant and, more drastically, the last $j$ issues can be removed because each voter can only rearrange the first $2^j$ candidates with changes in his CP-net. So all of these candidates are the same on the first $m-j$ issues and cover all possible values for the last $j$ issues while all of them get a vote. \footnote{We note that while it is not possible to explicitly list all winners in this case, we can output the set of winners with the use of ``don't care'' wildcards for the last $j$ issues in polynomial space}
\end{enumerate}

For the general case, Theorem \ref{thm:evaluateokarbitraryk} proves \NP-hardness for just evaluating a given voting-scheme. In the rest of this work we will focus on $\OKeff$ instead of $\OK$.

\begin{theorem}\label{thm:evaluateokarbitraryk}
It is \NP-hard to determine the winner of a given profile when using the voting rule One-Step-$k$-Approval with arbitrary $k$ of polynomial size in $n$ and $m$ (i.e., exponential value).
\end{theorem}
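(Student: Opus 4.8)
The plan is to prove hardness by a reduction from \DreiSAT. Given an instance with $\satn$ variables $\satv_1,\dots,\satv_{\satn}$ and $\satm$ clauses, I would create one issue $\Issue_i$ per variable $\satv_i$, so that each candidate (a complete assignment to $M=\{\Issue_1,\dots,\Issue_{\satn}\}$) corresponds to a truth assignment, identifying $\Issue_i=\issue_i$ with $\satv_i$ true and $\Issue_i=\overline{\issue_i}$ with $\satv_i$ false. For each clause I would introduce a single voter whose acyclic CP-net is designed to award a point precisely to the candidates satisfying that clause; then the score of a candidate equals its number of satisfied clauses, the maximum score over all candidates is $\satm$ iff the formula is satisfiable, and any procedure that determines the winner lets us read off this maximum in polynomial time.

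The workhorse is the following fact about linearizing an acyclic CP-net. If we fix an issue order compatible with the net (every parent before its children) and extend the induced partial order lexicographically along it, then $\rank(c)=1+\sum_i b_i(c)\,2^{\satn-i}$, where $b_i(c)=0$ if $c$ assigns $\Issue_i$ the value preferred by the relevant cp-statement (given $c$'s values on $Pa(\Issue_i)$) and $b_i(c)=1$ otherwise; this holds because at the most significant issue on which two candidates differ they agree on all earlier issues, hence on the parents of that issue, so the cp-statement there decides the comparison. For the voter encoding a clause over $\satv_i,\satv_j,\satv_l$, I would place these three issues first in its order, keep them independent, and orient each preference so that the literal's satisfying value is preferred (preferring $\issue$ for a positive literal and $\overline{\issue}$ for a negative one), with the remaining $\satn-3$ issues last and arbitrary. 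The clause is falsified exactly when all three literals are false, i.e.\ when $b=1$ on all three leading issues, forcing $\rank(c)\ge 1+7\cdot 2^{\satn-3}$; and if at least one literal is satisfied the three leading bits are not all $1$, so $\rank(c)\le 7\cdot 2^{\satn-3}$. Hence, setting $k=7\cdot 2^{\satn-3}$, this voter ranks $c$ in its top $k$ iff $c$ satisfies the clause.

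With this $k$ --- which has $\bigO{\satn}$ bits yet exponential value, as the statement requires --- each candidate's score equals its number of satisfied clauses, so the maximum score is $\satm$ precisely when the \DreiSAT instance is satisfiable. Given a winner $w$ returned by a winner-determination algorithm, its score is computable in polynomial time (for every voter just evaluate the $b_i(w)$ and compare the resulting value with $k$), so we can test whether it equals $\satm$ and thereby decide satisfiability, establishing \NP-hardness. The point needing the most care is the top-$k$ notion itself: since a CP-net yields only a partial order, I would rely on the per-voter linearization above (each voter carrying its own compatible issue order, which is legitimate because $\OK$, unlike $\SM$, is not required to act on an $\mathcal{O}$-legal profile) and check that the rank formula and the threshold argument hold uniformly across all clause-voters despite their differing orders.
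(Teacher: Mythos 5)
Your proposal is correct and takes essentially the same route as the paper: the identical reduction from \DreiSAT with one issue per variable, one voter per clause whose three clause-issues are placed most significant in that voter's own lexicographic order, and $k = 7\cdot 2^{\satn-3}$, so that each voter approves exactly the candidates satisfying its clause and the winner's score is $\satm$ iff the formula is satisfiable. Your explicit rank formula merely spells out (and quietly corrects) the paper's informal remark about ``the two non-satisfying combinations''---each clause excludes exactly one of the eight assignments to its three variables, which is precisely why $k=7\cdot 2^{\satn-3}$ works.
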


\begin{proof}
We reduce \DreiSAT{} on $\satm$ clauses and $\satn$ variables to an $OK$-election with $\satm$ voters, $\satn$ issues and $k = 7 \cdot 2^{\satn-3}$. So given a formula $F$ on variables $\satv_1, \ldots, \satv_\satn$ and clauses $\satc_1, \ldots, \satc_\satm$ we introduce for each variable $\satv_i$ the issue $\Issue_i$. For each clause $\satc$ containing the variables $\satv_i, \satv_j, \satv_{\ell}$, we create one voter ordering the issues by $(\Issue_i > \Issue_j > \Issue_{\ell} > \hat{\Issue} \setminus \{\Issue_i, \Issue_j, \Issue_k\})$ with $\hat{\Issue}$ being the set of all issues. This voter prefers the three issues in a way such that the two non-satisfying combinations are on the last position (regarding only these three issues). The assignment and order of the last $\satn-3$ issues is not relevant. Therefore the voter votes for the first $7 \cdot 2^{\satn-3}$ candidates which exactly corresponds to all candidates with a satisfying combination for the clause $\satc$ (no matter what the rest of the issues are set to). 
Now a winner of the election obtains $\satm$ votes if and only if $F$ is satisfiable.
\end{proof}

\paragraph*{Bribery}
We consider the problem that an external agent, the \emph{briber}, who knows the CP-nets of all voters, asks them to execute changes in their cp-statements. We distinguish the cases that the briber can ask for a change in cp-statements of independent issues only ($\IV$), dependent issues only ($\DV$), or in all ($\IDV$) issues~\cite{mattei2013bribery}.   
We consider the following five cost schemes~\cite{mattei2013bribery}:
 \begin{itemize}
 \item $\Cequal$: Any amount of change in a single CP-net has the same unit cost.
 \item $\Cflip$: The cost of changing a CP-net is the total number of individual cp-statements that must be flipped to obtain the desired change.
 \item $\Clevel$: The cost of a bribery is computed\footnote{The formula given here differs from the one of Mattei~\etal~\cite{mattei2013bribery}. See the argument of Dorn and Krüger~\cite[Remark~1]{dorn2014hardness} why both are equivalent.} as $\sum_{X_j \in M'} (k+1-\text{level}(X_j))$, where $M'\subseteq M$ is the set of bribed issues for this voter, $k$ is the number of levels in the CP-net, and $\text{level}(X_j)$ corresponds to the depth of issue $X_j$ in the dependency graph. More precisely, \\ $\text{level}(X_j)= \begin{cases} 1 & \text{if }  X_j \text{ is an independent issue} \\ i+1 & \text{else, with } i = max\{ \text{level}(\Issue_k)~|~\Issue_k \text{ is a parent of }\Issue_j\}.\end{cases}$
 \item $\Cany$: The cost is the sum of the flips, each weighted by a specific cost.
 \item $\Cdist$: This cost scheme requires a fixed order of the issues for each voter (not necessarily the same for each of them), inducing a  strict total order over all candidates. The cost to bribe a voter to make~$c$ his top candidate is the number of candidates which are better ranked than $c$ in this order.
\end{itemize}
\noindent
Additionally, these cost schemes are extended by a cost vector ${\bf Q} \in (\mathbb{N})^n$ for an individual cost factor for each voter. The factor for voter~$v_i$ is denoted by $Q[i]$ and is multiplied with the costs calculated by the used cost scheme to obtain the amount that the briber has to pay to bribe $v_i$. 

\noindent The destructive $(D,A,C)$-bribery problem is then defined in the following way: 
\begin{quote}
$(D,A,C)$-\textsc{Destructive-Bribery} (\destBrib)\newline
\textbf{Given:}  A profile of $n$ CP-nets over $m$ common binary issues,  a winner determination voting rule $D \in \{\SM, \OP, \OV, \OKeff\}$, a cost scheme $C\in \{\Cequal,\Cflip,\Clevel,\Cany,\Cdist\}$, a bribery action $A\in \{\IV, \DV,$ $\IDV\}$, a cost vector ${\bf Q}\in (\mathbb{N})^n$, a budget~$\beta \in \mathbb{N}$, and a disliked (`hated') candidate~$h$. 
With $\SM$, we also require $\mathcal{O}$-legality for one common order and with $\Cdist$, and $\OKeff$ up to $n$ given total orders over the issues. \newline
\textbf{Question:} Is it possible to change the cp-statements of the voters such that the candidate $h$ is not in the set of winners of the bribed election, without exceeding~$\beta$?
\end{quote}

We also consider \textsc{Weighted-$(D,A,C)$-}\destBrib, which is defined in the same way, but with weighted voters, which is a typical variant for bribery problems (see the overview of Faliszewski \textit{et al.}~\cite{faliszewski2009hard}).
 
Moreover, we also consider weighted and unweighted $(D,A,C)$-\negative-\destBrib. The notion of negative bribery was introduced by Faliszewski~\etal\cite{faliszewski2009hard} for the constructive case (to make a candidate win the election) in order to cover a more inconspicuous way of bribery: the briber wants to make his preferred candidate~$p$ win by not bribing any voter to vote directly for~$p$, therefore just redistributing the votes for the other candidates through bribery. For the destructive case we consider in this work, the analogue restriction is to prohibit bribing voters to vote against the disliked candidate if they have not done so before (recall that with $\OK$ and $\OV$, a voter votes for several candidates).

Sometimes we show that a result holds for both, the negative and the non-negative case. We indicate this by $\langle\negative\rangle$ in the problem name.

For all our hardness results we prove only \NP-hardness for the corresponding problems, but immediately obtain \NP-completeness due to obvious membership in \NP for all of the problems.

\begin{table}[ht]
\caption{Complexity results (\Pclass{} stands for solvability in polynomial time, \NP-c for \NP-completeness) for variants of the destructive bribery problem in CP-nets shown in this paper. These variants are specified by a cost scheme~($\Cequal,\Cflip,\Clevel,\Cany,\Cdist$), given at the top of the corresponding column, and a voting scheme~($\SM,\OP,\OV,\OKeff$) at the beginning of the corresponding row. The unweighted variants are given in the top half of the table, the weighted ones are listed in the bottom half. The given results all hold for the bribery actions $IV$, $DV$, and $IV+DV$, if not stated differently. The cases that are solvable in polynomial time, if the entry in the cost vector is identical for every voter, are not included. 
\label{table:results}}

\centering
{\begin{footnotesize}\tabcolsep=0.18cm
\begin{tabular}{rlrccccccc}
\toprule 
 & & & ~$\Cequal$~ & ~$\Cflip$~ & ~$\Clevel$~ & ~$\Cany$~ & ~$\Cdist$~ & ~negative~ & non-negative\\
\cmidrule{2-8}
\parbox[t]{3mm}{\multirow{8}{*}{\rotatebox[origin=c]{90}{unweighted}}}  & $\SM$ & & \PResult{thm:negative:sm} & \PResult{thm:negative:sm} & \PResult{thm:negative:sm} & \PResult{thm:negative:sm} & \PResult{thm:negative:sm} & \texttt{Thm.\ref{thm:negative:sm}} & \texttt{Thm.\ref{thm:negative:sm}}\\
   & $\OP$ & $\IV$ & \NPcResult{thm:negativebase} & \NPcResult{thm:negativebase} & \NPcResult{thm:negativebase} & \NPcResult{thm:negativebase} & \NPcResult{thm:negativebase} & \texttt{Thm.\ref{thm:negativebase}} & \texttt{Cor.\ref{cor:positivebase}}\\
 & & $\DV$ & \NPcResult{thm:negativebase} & \NPcResult{thm:negativebase} & \NPcResult{thm:negativebase} & \NPcResult{thm:negativebase} & \NPcResult{thm:negativebase} & \texttt{Thm.\ref{thm:negativebase}} & \texttt{Cor.\ref{cor:positivebase}}\\
 & & \hspace*{-2cm} $\IDV$ & \PResult{thm:negativebase:p} & \NPcResult{thm:negativebase} & \NPcResult{thm:negativebase} & \NPcResult{thm:negativebase} & \NPcResult{thm:negativebase} & \texttt{Thm.\ref{thm:negativebase:p}/\ref{thm:negativebase}} & \texttt{Cor.\ref{cor:positivebase:p}/\ref{cor:positivebase}}\\
 & $\OV$ & & \PResult{thm:negative:ov} & \PResult{thm:negative:ov} & \PResult{thm:negative:ov} & \PResult{thm:negative:ov} & \PResult{thm:negative:ov} & \texttt{Thm.\ref{thm:negative:ov}} & \texttt{Thm.\ref{thm:negative:ov}}\\
 & $\OKeff$ & $\IV$ & \NPcResult{thm:negativebase} & \NPcResult{thm:negativebase} & \NPcResult{thm:negativebase} & \NPcResult{thm:negativebase} & \NPcResult{thm:negativebase} & \texttt{Thm.\ref{thm:negativebase}} & \texttt{Cor.\ref{cor:positivebase}}\\
 & & $\DV$ & \NPcResult{thm:negativebase} & \NPcResult{thm:negativebase} & \NPcResult{thm:negativebase} & \NPcResult{thm:negativebase} & \NPcResult{thm:negativebase} & \texttt{Thm.\ref{thm:negativebase}} & \texttt{Cor.\ref{cor:positivebase}}\\
 & & \hspace*{-1.5cm} $\IDV$ & \PResult{thm:negativebase:p} & \NPcResult{thm:negativebase} & \NPcResult{thm:negativebase} & \NPcResult{thm:negativebase} & \NPcResult{thm:negativebase} & \texttt{Thm.\ref{thm:negativebase:p}/\ref{thm:negativebase}} & \texttt{Cor.\ref{cor:positivebase:p}/\ref{cor:positivebase}}\\
\cmidrule{2-8}
\parbox[t]{3mm}{\multirow{6}{*}{\rotatebox[origin=c]{90}{weighted}}} & $\SM$ & & \NPcResult{thm:neg:weig:SM} & \NPcResult{thm:neg:weig:SM} & \NPcResult{thm:neg:weig:SM} & \NPcResult{thm:neg:weig:SM} & \NPcResult{thm:neg:weig:SM} & \texttt{Thm.\ref{thm:neg:weig:SM}} & \texttt{Thm.\ref{thm:neg:weig:SM}}\\
 & $\OP$ & & \NPcResult{thm:negative:weighted:OPOK:npc} & \NPcResult{thm:negative:weighted:OPOK:npc} & \NPcResult{thm:negative:weighted:OPOK:npc} & \NPcResult{thm:negative:weighted:OPOK:npc} & \NPcResult{thm:negative:weighted:OPOK:npc} & \texttt{Thm.\ref{thm:negative:weighted:OPOK:npc}} & \texttt{Thm.\ref{thm:negative:weighted:OPOK:npc}} \\
 & $\OV$ & $\IV$ & \NPcResult{thm:weig:OVmain} & \NPcResult{thm:weig:OVmain} & \NPcResult{thm:weig:OVmain} & \NPcResult{thm:weig:OVmain} & \NPcResult{thm:weig:OVmain} & \texttt{Thm.\ref{thm:weig:OVmain}} & \texttt{Thm.\ref{thm:weig:OVmain}}\\
 & & $\DV$ & \NPcResult{thm:weig:OVmain} & \NPcResult{thm:weig:OVmain} & \NPcResult{thm:weig:OVmain} & \NPcResult{thm:weig:OVmain} & \PResult{thm:neg:weig:OVDVCdist:P} & \texttt{Thm.\ref{thm:weig:OVmain}/\ref{thm:neg:weig:OVDVCdist:P}} & \texttt{Thm.\ref{thm:weig:OVmain}/\ref{thm:neg:weig:OVDVCdist:P}}\\
 & & \hspace*{-2cm} $\IDV$ & \NPcResult{thm:weig:OVmain} & \NPcResult{thm:weig:OVmain} & \NPcResult{thm:weig:OVmain} & \NPcResult{thm:weig:OVmain} & \NPcResult{thm:weig:OVmain} & \texttt{Cor.\ref{thm:weig:OVmain}} & \texttt{Thm.\ref{thm:weig:OVmain}}\\
 & $\OKeff$ & & \NPcResult{thm:negative:weighted:OPOK:npc} & \NPcResult{thm:negative:weighted:OPOK:npc} & \NPcResult{thm:negative:weighted:OPOK:npc} & \NPcResult{thm:negative:weighted:OPOK:npc} & \NPcResult{thm:negative:weighted:OPOK:npc} & \texttt{Thm.\ref{thm:negative:weighted:OPOK:npc}} & \texttt{Thm.\ref{thm:negative:weighted:OPOK:npc}} \\
\cmidrule{2-8}
\bottomrule
\end{tabular}
\end{footnotesize}}
\end{table}

\newcommand{\wine}{\textit{wine}}
\newcommand{\beer}{\textit{beer}}
\newcommand{\fish}{\textit{fish}}
\newcommand{\meat}{\textit{meat}}
\newcommand{\chips}{\textit{chips}}
\newcommand{\rice}{\textit{rice}}
\newcommand{\main}{\textit{main}}
\newcommand{\side}{\textit{side}}
\newcommand{\drink}{\textit{drink}}

\begin{table}[ht]
\caption{Example for three CP-nets of the voters Alice, Bob, and Charlie over the three issues \emph{\main{} dish}, \emph{\side{} dish}, and \emph{\drink} with the domains $D(\main)=\{\fish,\meat\}$, $D(\side) = \{\rice,\chips\}$, and $D(\drink) = \{\wine,\beer\}$.  \label{table:example}}
\setlength{\tabcolsep}{4pt}
\centering
\begin{small}
\begin{tabular}{llll} 
\toprule 
 & \main & \side & \drink \\
\cmidrule{2-4}
Alice (A)	& $\fish > \meat$	& $\rice > \chips$	& $\meat\colon \beer > \wine$ \\
~($\main>\drink>\side$)		&					&					& \hspace*{.21cm}$\fish\colon \wine > \beer$\\
\cmidrule{2-4}
Bob (B)		& $\fish > \meat$	& $\chips > \rice$	& $\meat,\chips\colon \beer > \wine$ \\
~($\main>\side>\drink$)		& 					& 					& \hspace*{.21cm}$\meat,\rice\colon \beer > \wine$ \\
		& 					& 					& \hspace*{.21cm}$\fish,\chips\colon \beer > \wine$ \\
				& 					& 					& \hspace*{.42cm}$\fish,\rice\colon \wine > \beer$ \\
\cmidrule{2-4}
Charlie (C)	& $\chips\colon \fish > \meat$	& $\chips > \rice$	& $\chips\colon \beer > \wine$ \\
~($\side>\drink>\main$)		& \hspace*{.21cm}$\rice\colon \meat > \fish$ 	&					& \hspace*{.21cm}$\rice\colon \wine > \beer$\\
\bottomrule
\end{tabular}\end{small}
\end{table}

\paragraph*{Example}
Table~\ref{table:example} shows the CP-nets of the voters Alice (A), Bob (B), and Charlie (C) over the three issues \emph{\main{} dish}, \emph{\side{} dish}, and \emph{\drink} with the domains $D(\main)=\{\fish,\meat\}$, $D(\side) = \{\rice,\chips\}$, and $D(\drink) = \{\wine,\beer\}$ to vote for a joined meal. Additionally, the individual orders of the issues are given for each voter, implying the following total orders as
expansions of the partial orders defined by the CP-nets:
\begin{center}{\setlength{\tabcolsep}{1pt}
\begin{tabular}{rcccccccc}
A:~ & (\fish,\rice,\wine) & $>$ & (\fish,\chips,\wine) & $>$ & (\fish,\rice,\beer) & $>$ & (\fish,\chips,\beer) & $\dots$\\
B:~ & (\fish,\chips,\beer) & $>$ & (\fish,\chips,\wine) & $>$ & (\fish,\rice,\wine) & $>$ & (\fish,\rice,\beer) & $\dots$\\
C:~ & (\fish,\chips,\beer) & $>$ & (\meat,\chips,\beer) & $>$ & (\fish,\chips,\wine) & $>$ & (\meat,\chips,\wine) & $\dots$\\
\end{tabular}}
\end{center}

Using the voting rule $\OK$ with $k = 3$, the candidate (\fish,\chips,\wine)
wins the election, because it is the only one receiving a point from
each of the three voters. With the voting rule $\OP$, the candidate
(\fish,\chips,\beer) is the winner, thanks to the two points from Charlie and Bob. The same candidate wins with the voting rule $\SM$ with respect to the order~$\mathcal{O}\colon \side > \main > \drink$, for which the given profile is $\mathcal{O}$-legal. In the majority vote on the issue \side{}, Bob
and Charlie prefer \chips{}, so \chips{} is chosen as a \emph{side dish}. Because of this Charlie votes---like the other two voters---for \fish{} in the majority vote for the second issue of order~$\mathcal{O}$. Finally Alice gets outvoted in the last issue, so the candidate (\fish,\chips,\beer) is the winner with the voting rule $\SM$, too. With the voting rule $\OV$, Alice casts her veto against (\meat,\chips,\wine), Bob
casts his veto against (\meat,\rice,\wine), and Charlie casts his veto against (\fish,\rice,\beer).
Therefore, the remaining five
candidates are the winning candidates with this rule.
If a unique winner was
needed, a tie-breaking rule could be used.

If the briber wanted to prevent candidate $h = (\fish,\chips,\beer)$ from
winning with voting rule $\OP$, it would be sufficient to bribe Bob to
flip his preference in issue \side{} to $\rice > \chips$. This would make
(\fish,\rice,\wine) the top candidate of Bob and, since Alice is voting for the very same candidate in the first place, the winner of the election. So the briber can reach his goal by this bribery. Note that this flip is only possible with the bribery action $\IV$ or $\IDV$, because \side{} is an independent issue for Bob.

How much does the briber have to pay for this requested flip? For the cost
scheme~$\Cany$, this directly depends on the input, since each flip can
have its own costs. With~$\Cflip$, the cost factor is 1, since only one cp-statement has to be flipped. With~$\Clevel$, it is 2 for this flip, because Bob’s CP-net has two levels and the issue \side{} is an independent one. The costs are the same for~$\Cdist$, because Bob only prefers the candidates (\fish,\rice,\beer) and (\fish,\chips,\wine) to (\fish,\rice,\wine). Finally, with~$\Cequal$, bribing Bob has the same prize (assuming equal cost vector entries) as bribing any other arbitrary voter with a different top candidate to vote for (\fish,\rice,\wine). To obtain the final costs the briber has to pay, each of these values is then multiplied by the corresponding entry of the cost vector~${\bf Q}$. Therefore, the briber might sometimes be cheaper off to bribe a voter with a small cost vector entry to flip a lot of cp-statements, than one with only a few flips required but having a huge cost vector entry.

\section{The unweighted case}
In this section, we investigate the case where voters are unweighted. 
\begin{theorem}\label{thm:negative:sm}
$(SM, A, C)$-$\langle\negative\rangle$-\destBrib with bribery action $A\in\{IV, DV, IV+DV\}$ and cost scheme $C\in\{\Cequal,\Cany,\Cflip,\Clevel,\Cdist \}$ is solvable in polynomial time.
\end{theorem}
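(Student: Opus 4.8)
The plan is to exploit the sequential, issue-by-issue structure of \SM. Given the common order $\mathcal{O}$ for which the profile is $\mathcal{O}$-legal, the winner $w=(w_1,\dots,w_m)$ is produced by processing the issues along $\mathcal{O}$ and deciding each $\Issue_i$ by a plain majority vote, where every voter's vote on $\Issue_i$ is the value preferred by the cp-statement conditioned on the already-fixed majority outcomes of the parents of $\Issue_i$. Since the CP-nets are acyclic and compact, each conditioned preference, and hence $w$, is computable in polynomial time. If $w\neq h$ we are immediately done (answer \emph{yes}, cost $0$), so from now on assume $w=h$, i.e.\ every $\Issue_i$ currently comes out as $h_i$.

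The central observation is that preventing $h$ from winning is equivalent to changing the realized value of a \emph{single} issue. Indeed, take any successful bribery and let $\Issue_i$ be the first issue along $\mathcal{O}$ whose outcome differs from $h_i$ afterwards; then all parents of $\Issue_i$ are still decided as in $h$, so this bribery must already flip the conditional majority on $\Issue_i$ (with parents fixed to $h$'s values) from $h_i$ to $\overline{h_i}$. Conversely, flipping only this one conditional majority suffices, because bribes touching $\Issue_i$ alone cannot change the outcomes of the issues preceding it, which therefore remain at $h$'s values. Hence the minimum cost of a destructive bribery equals $\min_i c_i$, where $c_i$ is the cheapest way to turn the conditional majority on $\Issue_i$ against $h_i$, and I would output \emph{yes} iff $\min_i c_i \le \beta$.

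It then remains to compute each $c_i$ in polynomial time. Flipping the conditional majority on $\Issue_i$ means bribing enough of the voters currently voting $h_i$ so that $\overline{h_i}$ obtains a strict majority; the required number of flips follows directly from the current margin. The only freedom is \emph{which} voters to flip, so the plan is to compute, for every admissible voter, the price of making it vote $\overline{h_i}$ on $\Issue_i$, sort these prices, and sum the cheapest ones until the majority tips. A voter is admissible only if $\Issue_i$ may be bribed under the action $A$ (independent for \IV, dependent for \DV, unrestricted for \IDV --- the dependency status of $\Issue_i$ may differ per voter), and, in the \negative{} variant, only if moving it away from $h$ is permitted because it did not already support $h$; both conditions are checkable per voter in polynomial time, so the \negative{} and non-negative cases are treated uniformly by merely restricting the admissible set. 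For $\Cequal,\Cflip,\Clevel$ and $\Cany$ the per-voter price is the scheme-dependent cost of the single required change to that voter's CP-net, multiplied by its entry of $Q$, exactly as in the constructive analyses; the greedy choice is optimal because these costs are additive over voters.

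The step I expect to be the main obstacle is $\Cdist$, together with making the optimality argument watertight there. Unlike the additive schemes, $\Cdist$ charges each voter the number of candidates ranked above its new top in its individual order, so it does not obviously decompose over issues, and the candidate space has exponential size $2^m$. The work here is to show that the cheapest single-issue change making a voter vote $\overline{h_i}$ can still be priced in polynomial time from the structure of its order and CP-net, and that this per-voter price lower-bounds the contribution of that voter in \emph{any} bribery realizing the same change on $\Issue_i$ --- which is exactly what guarantees that restricting an optimal bribery to its effect on $\Issue_i$ costs no more than the original and thus validates taking $\min_i c_i$. Once this is settled for $\Cdist$, the overall algorithm --- iterate over the $m$ issues and, for each, run the greedy selection over the $n$ voters --- runs in polynomial time, placing all these cases in \Pclass.
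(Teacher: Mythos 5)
Your proposal is correct and takes essentially the same route as the paper: the paper's proof likewise reduces the problem to, for each issue, computing the cheapest set of admissible voters whose flip turns that issue's majority against~$h$ (collecting per-voter costs, sorting, summing until the majority tips, then taking the cheapest issue), with the bribery action and the negative restriction handled exactly as you do, by excluding inadmissible voters per issue. The $\Cdist$ obstacle you flag is the same one the paper acknowledges, and it dispatches it by citing Mattei~\etal~[Theorem~3]: the top candidate after a single-issue flip, and hence its rank and the per-voter cost, is computable in polynomial time, and the lower-bound property you ask for does hold, since any bribery that flips the decisive cp-statement (issue $\Issue_i$ in the parent context fixed by $h$) necessarily moves the voter's top candidate at least as far down his original order as the single flip does.
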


\begin{proof}
We start with the negative case. For each issue find the minimum costs to spend for reaching a majority against~$h$. This can be done by collecting all costs for one issue, sorting and summing up. Finally the issue which is cheapest to bribe is chosen. This works for each cost scheme for which it is easy to calculate the bribery costs for a single flipped issue, which is the case for all cost schemes used here\footnote{This is most unintuitive for $\Cdist$, but identifying the top candidate after bribing one issue and determining the respective cost can be done in polynomial time as described by Mattei~\etal~\cite[Theorem~3]{mattei2013bribery}.}.

Note that depending on the allowed bribery action, not every voter may be bribable in each issue. Similarly,  we have to ignore voters who initially vote for~$h$ but who do not vote for~$h$ any more after a bribery of the considered issue.
These voters have to be taken into account in the non-negative case, though. However, this is the only modification needed for this case. 
\end{proof}

\begin{theorem}\label{thm:negativebase}
($D,A,C$)-\negative-\destBrib{} with $D\in\{\OP, \OKeff\}$, $A\in\{\IV,\DV\}$, $C\in\{\Cequal,\Cflip,\Clevel,\Cany,\Cdist\}$ is \NP-complete. In addition, ($D,\IDV,C'$)-\negative-\destBrib{} is \NP-complete for $C'\in\{\Cflip,\Clevel,\Cany,\Cdist\}$. 
All these results hold even if all entries in the cost vector are identical.
\end{theorem}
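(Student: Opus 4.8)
Since membership in \NP{} is immediate, the plan is to establish \NP-hardness by reducing from \DreiSAT{} (a variant of satisfiability); the \negative{} and cost-vector-uniform strengthenings will fall out of the same construction. The guiding intuition is the asymmetry stressed in the introduction: unlike the constructive case, where the briber already knows which candidate to support, here the briber must \emph{choose} a single challenger $c$ and concentrate enough votes on it to overtake $h$, and there are exponentially many candidates to choose from. I would encode the truth assignment directly into this challenger: introduce one issue $\Issue_i$ per variable $\satv_i$, so that the projection of $c$ onto the variable-issues is read as an assignment. The hated candidate $h$ is made the unique initial winner by a block of ``supporter'' voters; the \negative{} restriction forbids bribing exactly these voters away from $h$, so $h$'s score is frozen and the only route to dethroning $h$ is to lift some other candidate above it.

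The heart of the reduction is a per-clause gadget controlling the reachability of $c$. For each clause $\satc$ I would add a voter (or small group) whose CP-net contains a dependent ``satisfaction-flag'' issue whose cp-table over the three clause-variables evaluates to the value demanded by $c$ for all literal combinations except the one falsifying $\satc$. Under the action \IV{} the briber may re-set the (independent) variable-issues but not the (dependent) flag; hence this voter can be redirected onto $c$ precisely when $c$ restricted to the clause satisfies it. For \DV{} I would invert the roles---making the variable-issues dependent and the flag independent---so that the identical ``redirectable iff satisfied'' behaviour is obtained under the dependent-only action. I would then fix the vote counts (e.g.\ give $h$ exactly one vote fewer than the number of clause-gadgets) so that a challenger overtakes $h$ \emph{iff} it collects the contribution of \emph{every} clause-gadget, i.e.\ iff $c$ satisfies all clauses, i.e.\ iff the formula is satisfiable. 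The extension to \OKeff{} uses the same skeleton, padding with dummy issues so that the relevant candidate still controls the approval counts.

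To make one budget serve all five cost schemes at once---and thereby obtain the result for identical cost-vector entries---I would engineer the initial preferences so that every \emph{consistent} challenger is equidistant from a common starting candidate (for instance by pairing each variable-issue with a complementary issue, forcing a constant number of flips for any assignment); then the intended redirections cost the same total under \Cflip, \Clevel, \Cany{} and \Cdist, while under \Cequal{} each redirection is simply one unit. Under \IDV{} the per-flip schemes retain this cost structure, which is why they stay hard, whereas \Cequal{} collapses to free reassignment and lands in \Pclass{} (treated separately in Theorem~\ref{thm:negativebase:p}).

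I expect the main obstacle to be the soundness direction: proving that when the formula is \emph{un}satisfiable no within-budget bribery can dethrone $h$. This requires the construction to rule out all ``spurious'' attacks---splitting the budget among several challengers, redirecting a clause-voter onto a candidate that is not a legitimate assignment, or exploiting a cheaper-than-intended flip available in some single cost scheme---so that the only viable strategy is to pour the contribution of all clause-gadgets onto one globally satisfying assignment. Getting the counts and the equidistance padding tight enough that this holds \emph{simultaneously} for \Cequal, \Cflip, \Clevel, \Cany{} and \Cdist, and for both \OP{} and \OKeff, is where the real work lies.
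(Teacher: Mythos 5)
Your \IV{} gadget is plausible, but the ``invert the roles'' step for \DV{} does not work, and this is the crux of the theorem. Under \DV{} the briber may rewrite the cp-tables of \emph{dependent} issues arbitrarily; so if the variable-issues are dependent and the satisfaction-flag is independent, the flag is frozen at its initially preferred value while the briber sets the variable-issue rows to whatever he likes, and the voter's new top candidate is (flag at its fixed value, variables anywhere). The ``redirectable iff satisfied'' coupling in your design lives entirely in the \IV{} direction, where the flag is \emph{computed} from the bribed variables through an untouchable cp-table; inverting the dependency destroys that computation, so for an unsatisfiable formula the briber can still pour every clause-voter onto one arbitrary assignment-candidate and dethrone $h$ --- soundness fails. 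The paper's construction is structurally different at exactly this point: the three clause issues are kept \emph{independent} (hence unbribable under \DV) with hard-coded values, which forces one voter per satisfying combination of the clause --- six per clause --- plus $\satm-1$ voters on $h$, so that a challenger reaches $\satm$ votes iff some global assignment picks, for each clause, the one pre-built voter that already agrees on the frozen clause issues. This is also why the paper reduces from \NAEdreiSAT{} rather than \DreiSAT{}: in its \IV{} variant the clause issues hang off a bribable root issue $\Issue^*$, and bribing $\Issue^*$ complements all three clause values at once, which is harmless precisely because the set of NAE-satisfying assignments is closed under complement. Your choice of plain \DreiSAT{} would break under any such root-flip repair.

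The \IDV{} part has a second genuine gap: a single ``equidistance'' budget cannot simultaneously protect the gadget under all per-flip schemes. Under \Cflip{}, flipping one cp-statement of your flag (at the falsifying row) costs a single flip, far below any budget that must also pay for $\satm$ redirections touching up to $\satn-3$ issues each; so the briber tampers with the flag directly and soundness fails again. The paper needs scheme-specific cost amplification: $\satm^2(\satn-3)$ parallel gadget issues hanging off the clause issues for \Cflip{}, a queue of $\satm\satn$ gadget issues pushing the (shallow) clause issues' level-cost above the budget $\satm(\satn-3)$ for \Clevel{}, direct costs with $\budget=0$ for \Cany{}, and an importance order with $\lceil\log\satm\rceil+1$ extra issues and budget $\satm\cdot 2^{\satn-3}-1$ for \Cdist{}, where touching a clause issue costs at least $2^{\satn-3+\lceil\log\satm\rceil}>\budget$. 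You correctly identify soundness against spurious attacks as ``where the real work lies,'' but the two ideas that actually do that work --- frozen clause issues with one voter per NAE-satisfying combination, and per-scheme cost-amplifying gadgetry --- are absent from the proposal, so as written it proves hardness for at most the $(\OP,\IV)$ cases.
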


\begin{proof}
We give a base reduction from \NAEdreiSAT{} to prove Theorem~\ref{thm:negativebase}. For this reduction we claim that some properties hold, which we will then show to hold for the various combinations of allowed bribery action, cost scheme and voting rule.

\noindent Assume we are given an \NAEdreiSAT{} instance with $\satm{}$ clauses and $\satn{}$ variables. For each variable~$\satv_i$, we create one issue $\Issue_i$. Since each issue has exactly two possible assignments, we can establish a one-to-one relation between the full assignments of the issues $\Issue_i$ and the one of the variables $\satv_i$. For this relation we say the assignment of $\issue_i$ to $\Issue_i$ corresponds to the assignment of $1$ to variable $\satv_i$.  
We will later on---in the extensions of the base case---create additional gadget-issues. 

For each clause $\satc$ with the variables $\satv_q,\satv_r,\satv_s$ and each of the six different satisfying assignments to these variables for $\satc$, we create one voter with the following preferences: He prefers $\issue_i$ over $\overline{\issue_i}$ for each issue with $i\notin\{q,r,s\}$. For the remaining three issues he prefers $\issue_l$ over $\overline{\issue_l}$, if $1$ is assigned to variable $\satv_l$ in the satisfying assignment of $\satc$; he prefers $\overline{\issue_l}$ over $\issue_l$ otherwise. For the considered clause with variables $\satv_q,\satv_r,\satv_s$, we obtain 6 voters which we refer to as \textit{qrs}-voters. Doing this for all clauses, we obtain $6\satm$ voters. Finally, we create $\satm-1$ additional voters, all having~$h$ as their top-candidate\footnote{The candidate $h$ can correspond to an arbitrary solution to the formula because \textsc{AnotherSAT} (the variant of \textsc{SAT} where, given a formula and a satisfying assignment for it, one has to find another satisfying assignment for the formula) still is \NP-complete, see for example \cite{Juban1999}.}. We set the entry in the cost vector for each voter to $1$.

\noindent We assume that the following property holds:

\begin{enumerate}[noitemsep,label={(\roman*)}]
\item No voter who is voting for $h$ can be bribed to vote against~$h$.
\end{enumerate}
We further assume that for all other voters the following properties apply:
\begin{enumerate}[noitemsep,label={(\roman*)}]\setcounter{enumi}{1}
\item The preferences within issues associated with the clause the voter was created for cannot be changed.
\item Changing the preferences within a gadget-issue does not help the briber.
\item The preferences within all remaining issues can be bribed freely.
\end{enumerate}

\noindent Assuming that properties (i) to (iv) are fulfilled, it is easy to see that the bribery instance can be solved if and only if the corresponding \NAEdreiSAT-formula is satisfiable. Given a satisfying assignment to the formula, we can translate it to the winning candidate by following the one-to-one relation. Since the assignment satisfies each clause, the briber can bribe one of the six voters for each clause to vote for the winning candidate (following (ii) and (iv)). Therefore this candidate will have $\satm$ votes in the end, while there are still only $\satm-1$ votes for $h$. The other direction can be shown analogously with the help of property (iii).

We will now show the extensions to this base reduction to prove Theorem~\ref{thm:negativebase}. Property~(i) is fulfilled because we are looking at the negative case.

($\OP,\IV,C$): We need one gadget-issue $\Issue^*$. Each voter is set to prefer $\issue^*$ over $\overline{\issue^*}$. For each \textit{qrs}-voter, the issues $\Issue_q,\Issue_r,\Issue_s$ are changed to depend on $\Issue^*$, keeping their original preferences for $\issue^*$ and inverted preferences for $\overline{\issue^*}$. Here we utilize that \NAEdreiSAT{} is closed under complement, therefore the assignment of $\Issue^*$ is not important at all. This modification ensures the properties (ii)--(iv) to hold for each cost scheme, since no costs are involved.

($\OP,\DV,C$): Once again we need one gadget-issue $\Issue^*$. Each voter is set to prefer $\issue^*$ over $\overline{\issue^*}$. Complementary to the case before, for each \textit{qrs}-voter, each issue $\Issue_i$ with $i\notin\{q,r,s\}$ is changed to depend on $\Issue^*$, keeping their original preferences for $\issue^*$ and inverted preferences for $\overline{\issue^*}$. This modification ensures the properties (ii)--(iv) to hold for each cost scheme, since no costs are involved.

($\OP,\IDV,\Cany$): With $\IDV$ we need costs and an appropriate budget to ensure that the issues $\Issue_q,\Issue_r,\Issue_s$ cannot be bribed for a \textit{qrs}-voter. With $\Cany$ we can simply set the costs to bribe these issues to~$1$, and for each remaining issue to $0$. With the budget set to $\budget=0$ this ensures the properties (ii)--(iv) hold.

($\OP,\IDV,\Cflip$): We need to add $\satm^2(\satn-3)$ gadget-issues, which we denote as $\Issue^*_{a,b}$ with $1\leq a\leq \satm$ and $1\leq b \leq \satm(\satn-3)$. For each \textit{qrs}-voter, each gadget-issue $\Issue^*_{j,b}$ with $1\leq b \leq \satm(\satn-3)$ depends on the issues $\Issue_q,\Issue_r,\Issue_s$ corresponding to the variables of the clause $\satc_j$. With the most preferred assignment in these three issues the preference in the gadget-issue is set to $\issue^*_{j,b} > \overline{\issue^*_{j,b}}$, and for each other assignment to $\overline{\issue^*_{j,b}} > \issue^*_{j,b}$. Finally we set the budget to $\budget=\satm(\satn-3)$. This ensures the properties (ii)--(iv) to hold.

($\OP,\IDV,\Clevel$): We need $\satm\satn$ gadget-issues $\Issue^*_{b}$ for $b \in \{1, \dots, \satm \satn\}$. In contrast to $\Cflip$, these issues do not depend on the issues $\Issue_q,\Issue_r,\Issue_s$ corresponding to the variables of the clause $\satc_j$ in \emph{parallel}, but in a \emph{queue}. 
So $\Issue^*_{1}$ depends on the issues $\Issue_q,\Issue_r,\Issue_s$ for a \textit{qrs}-voter. 
Only for the most preferred assignment within these three issues we set the preference of issue $\Issue^*_{1}$ to $\issue^*_{1} > \overline{\issue^*_{1}}$; in all other cases we set it to $\overline{\issue^*_{1}} > \issue^*_{1}$. 
For each subsequent issue $\Issue^*_{b}$ of this queue with $2\leq b \leq \satm\satn$, we set the preferences $\issue^*_{b-1} : \issue^*_{b} > \overline{\issue^*_{b}}$ and $\overline{\issue^*_{b-1}} : \overline{\issue^*_{b}} > \issue^*_{b}$. In addition, for all issues $\Issue_i$ with $i \in \{1, \dots, \satn\}\setminus\{q,r,s\}$, we set $\issue^*_{\satm\satn}: \issue_i > \overline{\issue_i}$ and else $\overline{\issue_i} > \issue_i$. Finally we set the budget to $\budget=\satm(\satn-3)$. This ensures the properties (ii)--(iv).

($\OP,\IDV,\Cdist$): We add $\lceil\log\satm\rceil + 1$ gadget-issues, denoted by $\Issue^*_a$. Every voter prefers $\issue^*_a > \overline{\issue^*_a}$ for each such issue. For each voter we set the (not necessarily identical) order as follows. The most important issues are the three issues corresponding to the variables of the clause $\satc_j$ associated with it, followed by the gadget-issues, and then by the remaining $\satn-3$ issues. The exact order within these three blocks is not important. We set the budget to $\budget=\satm \cdot 2^{\satn-3}-1$. This ensures that the briber can bribe all of the least important $\satn-3$ issues for~$\satm$ voters, while the budget is still too low to bribe even only one of the three most important issues of just one voter. Note that bribing such an issue costs at least $2^{n-3 + \lceil\log{m}\rceil} > \budget$. Therefore the properties (ii)--(iv) hold.

Since $\OP$ is the special case of $\OKeff$ with $k=1$, the \NP-completeness results shown so far carry over to $\OKeff$.
\end{proof}

We can achieve the same results for the non-negative cases, but for this we have to drop the constraint that these problems are \NP-complete even if the cost vector contains only 
identical values.

\begin{corollary}\label{cor:positivebase}
$(D, A, C)$-\destBrib is \NP-complete for each combination of a voting rule $D \in \{OP, \OKeff\}$, a cost scheme $C\in\{\Cequal,\Cflip,\Clevel,\Cany,\Cdist\}$ and a bribery action $A\in\{IV,DV\}$. Moreover, $(D, IV+DV, C')$-\destBrib with cost scheme $C' \in \{\Cflip,\Clevel,\Cany,\Cdist\}$ is \NP-complete, too.
\end{corollary}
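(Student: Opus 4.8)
The plan is to reuse, essentially unchanged, the reduction from \NAEdreiSAT{} constructed in the proof of Theorem~\ref{thm:negativebase}, together with every one of its cost-scheme- and action-specific gadget constructions. The key observation is that negativity entered that proof in exactly one place: property~(i)---no voter who currently supports~$h$ can be bribed to abandon~$h$---was obtained for free ``because we are looking at the negative case''. Properties~(ii)--(iv) concern only the \textit{qrs}-voters and the gadget-issues, and are enforced by gadget dependencies and by the chosen costs and budget; none of them refers to whether we are in the negative or the non-negative scenario. Hence the entire task reduces to re-establishing property~(i) by combinatorial means, that is, by making every voter who initially votes for~$h$ too expensive to bribe within the budget~$\budget$.

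The tool for this is precisely the (now non-uniform) cost vector~${\bf Q}$. First I would leave the construction and all budgets exactly as in Theorem~\ref{thm:negativebase} and keep $Q[i]=1$ for all $6\satm$ \textit{qrs}-voters, so that properties~(ii)--(iv) and the delicate budget calibrations stay intact. For each of the $\satm-1$ dedicated $h$-voters I would instead set $Q[i]=\budget+1$. For the cost schemes $\Cequal,\Cflip,\Clevel$ and $\Cdist$ every non-trivial change to a single voter has base cost at least~$1$ (a flip, a level term, or the candidate~$h$ sitting above the new top), so bribing any $h$-voter would cost at least $\budget+1>\budget$ and is infeasible. The only case needing separate treatment is $\Cany$, where a flip weight is an input quantity and the constructed budget can be~$0$, so a large multiplier alone does nothing; here I would instead directly assign every cp-statement of each $h$-voter a flip weight exceeding~$\budget$, which is legitimate since these weights belong to the $\Cany$ input. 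In all cases property~(i) is restored.

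What remains is the correctness argument, which I expect to follow immediately. In the non-negative scenario the briber's only extra power compared with the negative one is the ability to bribe voters who currently support~$h$; since every such voter is now unaffordable, the collection of affordable bribery actions coincides with the one available in the negative instance, while the cost of every other (affordable) action is unchanged. Therefore the equivalence proved in Theorem~\ref{thm:negativebase}---$h$ can be kept out of the winner set within budget if and only if the \NAEdreiSAT-formula is satisfiable---transfers verbatim, and \NP-membership is again immediate. I expect the main obstacle to lie exactly in this step: one has to verify that disabling only the $h$-voters truly cancels the additional freedom of the non-negative case without perturbing the budget balance of any individual gadget, and doing so forces a cost vector that is \emph{not} identical across voters---which is precisely why, unlike in Theorem~\ref{thm:negativebase}, the sharper claim ``even if all entries in the cost vector are identical'' can no longer be made.
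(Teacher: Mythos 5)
Your proposal is correct and follows essentially the same route as the paper's proof: reuse the reduction of Theorem~\ref{thm:negativebase} unchanged and restore property~(i) by setting the cost-vector entries of the $\satm-1$ voters supporting $h$ to $\budget+1$. Your separate treatment of $\Cany$ is a genuine refinement rather than a deviation: since in that gadget $\budget=0$ and some flip weights are $0$, the multiplier $\budget+1$ alone would still permit zero-cost briberies of the $h$-voters, so assigning their cp-statements positive flip weights (legitimate, as these weights are part of the $\Cany$ input) is indeed required---a detail the paper's two-line proof glosses over.
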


\begin{proof}
This follows by the same techniques used in the proof of Theorem \ref{thm:negativebase}. The only difference is that property (i) is not automatically satisfied. We can achieve unbribability of the voters voting for $h$ by setting the entries of the cost vector for these voters to $\budget+1$. Therefore, \NP-completeness follows.
\end{proof}

For the two remaining cases of $\OP$ and $\OKeff$, we will show solvability in polynomial time.

\begin{theorem}\label{thm:negativebase:p}
$(D, \IDV, \Cequal)$-\negative-\destBrib is solvable in polynomial time with $D \in \{\OP, \OKeff\}$.
\end{theorem}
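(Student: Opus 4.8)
The plan is to reduce the decision problem to a simple min-cost covering task that can be solved by sorting, exploiting two features of this particular combination. The cost scheme $\Cequal$ charges a flat price of $Q[i]$ for bribing voter $v_i$ by \emph{any} amount, and the action $\IDV$ lets us reshape a bribed voter's CP-net completely, so we may install an arbitrary candidate as his new top candidate. First I would pin down what the \negative{} restriction buys us: a voter who currently votes for $h$ may not be bribed into voting against $h$, so the set of voters contributing to $h$ is frozen. Consequently the score of $h$ is a fixed number $s_h$ (for $\OP$, the number of voters whose top candidate is $h$; for $\OKeff$, the number of voters with $h$ in their approved set), which the briber can neither lower nor needs to raise. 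The task therefore becomes: spend as little as possible so that some candidate $c\neq h$ attains score at least $s_h+1$, at which point $h$ leaves the winner set.

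For $\OP$ this is immediate. Only the $n-s_h$ voters not voting for $h$ are usable, and each can be redirected to any target $c\neq h$ for the flat cost $Q[i]$, or kept for free if he already votes for $c$. Since the cost of reaching $s_h+1$ for a fixed target depends only on which free voters already support it, it suffices to consider the at most $n$ distinct current top candidates of the free voters together with one `fresh' candidate that nobody supports. For such a $c$ with current support $s_c$, I would keep those $s_c$ voters and bribe the $\max(0,s_h+1-s_c)$ cheapest of the remaining free voters, found by sorting their $Q[i]$ values; if too few remain, $c$ is infeasible. Taking the minimum cost over all targets and comparing with $\budget$ decides the instance, and everything runs in polynomial time because the top candidate of an acyclic CP-net is computable efficiently.

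For $\OKeff$ the same scheme applies, but deciding which voters can be pushed onto a target $c$, and at what cost, is where the real work lies, and I expect this to be the main obstacle. A free voter can again be made to approve any $c$ by making $c$ his top candidate, at cost $Q[i]$. The delicate point concerns the frozen $h$-voters: the \negative{} restriction forbids only \emph{removing} $h$ from such a voter's approved set, so in principle he may still be bribed to approve an additional $c$ while retaining $h$, which could be cheaper than using a free voter. Whether this is possible is a realizability question---does there exist an acyclic, compact CP-net whose top $k$ candidates, with respect to the voter's fixed issue order, contain both $h$ and $c$? Because the induced order is lexicographic in the issue order, this reduces to a clean combinatorial condition on how early $h$ and $c$ must separate; only polynomially many candidates $c$ can share a top-$k$ with $h$ at all, and the condition is decidable in polynomial time for each pair $(v_i,c)$. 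With these per-voter costs and the (still polynomially many) relevant targets in hand, the algorithm is exactly as for $\OP$: for each target sort the eligible voters by cost and greedily select enough to reach $s_h+1$.

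Two special cases make the obstacle vanish and are worth isolating. For $\OP$ (that is, $k=1$) a frozen voter approves only $h$, so he can never support another candidate; and in the $\mathcal{O}$-legal, $k=2^{j}$ regime a voter approves exactly the candidates agreeing with his top on the first $m-j$ issues, so approving any $c$ that differs from $h$ there necessarily drops $h$. In both regimes only the free voters are ever useful and the sorting argument of the second paragraph applies verbatim; the general polynomially-bounded $k$ case is the one requiring the realizability analysis above. In all cases the procedure runs in polynomial time, establishing the claim.
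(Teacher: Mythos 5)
Your proposal is correct and takes essentially the same route as the paper's proof: partition the voters into those approving $h$ (frozen by the \negative{} restriction) and the free voters, observe that only polynomially many candidates need be considered as potential winners---those initially approved by some free voter, plus the at most $2^{\lceil\log k\rceil}$ candidates that a frozen voter can be pushed onto while keeping $h$, which is exactly your lexicographic-separation/realizability observation---and for each such target greedily buy the cheapest voters under the flat \Cequal{} cost. The only cosmetic difference is that in the $\OP$ and $k=2^j$ regimes the paper delegates the per-target cost computation to the known polynomial-time constructive-bribery algorithm of Dorn and Krüger, whereas you argue directly by counting supporters and sorting the cost-vector entries.
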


\begin{proof}
We partition the set of voters $V$ into the set $V_h$ of voters casting a vote to $h$ and the remaining voters $V\setminus V_h$.

Let us start with the voting rule $\OKeff$ and the case that $k$ is polynomial in $n$ and $m$ (denoted by $\textit{poly}(n,m)$). For this we show that the set of candidates who can be made a winner without exceeding the budget, is small enough to try out each of them. Since the voters in $V_h$ can only be bribed in the least important $\lceil\log k\rceil$ issues, the number of candidates these voters can be bribed to vote for is $2^{\lceil\log k\rceil}=\textit{poly}(n,m)$. Each of these we handle as a potential winning candidate. Because the voters of $V\setminus V_h$ can be bribed to vote for any of the exponentially many candidates ($IV+DV$) for the same costs ($\Cequal$) it is sufficient to take only those candidates into consideration who initially get at least one vote by a voter of $V\setminus V_h$. Since there are at most $kn$ such candidates, we can handle them as potential winning candidates, too. For each of the only $\textit{poly}(n,m)$ many potential winning candidates, we calculate the bribing costs for each voter to vote for this candidate, we sort them accorting to the costs, and add the bribing costs for as many voters as needed (taking the cheapest ones first). Each of these steps can be done in polynomial time in $n$ and $m$, implying an overall polynomial running time.

The remaining case of $\OKeff$ -- with $k$ being a power of $2$ and a global order over the issues for all voters given -- coincide with $\OP$ when the least important $\log k$ issues are removed. This is due to the fact that a bribery of any subset of those least $\log k$ issues for any voter only permutes the set of candidates this voter votes for. We can solve this again by identifying the potential winners and calculating the costs for the required bribery. Since with $\OP$ every voter votes just for one candidate, there are only up to $n$ such potential winning candidates. The bribery costs for letting them win the election can be calculated by the algorithm introduced by Dorn and Kr\"uger~\cite[Theorem~$10$]{dorn2014hardness} for solving the constructive case in polynomial time. 
\end{proof}

The non-negative case can be solved by a very similar algorithm.

\begin{corollary}\label{cor:positivebase:p}
$(D, \IDV, \Cequal)$-\destBrib is solvable in polynomial time with $D \in \{\OP, \OKeff\}$. 
\end{corollary}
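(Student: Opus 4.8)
The plan is to reuse the algorithm of Theorem~\ref{thm:negativebase:p} and augment it with the single new option that the non-negative setting grants: voters in $V_h$ may now also be bribed to stop voting for $h$. As there, I would partition the voters into the set $V_h$ currently voting for $h$ and the remainder $V\setminus V_h$, and reduce the decision to the following: over a polynomially bounded family of ``challenger'' candidates $c$, find the cheapest bribery that makes the final score of $c$ exceed that of $h$ (which already forces $h$ out of the winner set). For $\OKeff$ with $k$ a power of two under a global issue order, the problem again collapses to $\OP$ after deleting the least important $\log k$ issues, so it suffices to treat $\OP$ and $\OKeff$ with $k=\mathit{poly}(n,m)$.

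First I would re-establish that only polynomially many challengers matter. In the negative case this rested on the voters of $V_h$ being frozen; here they are not, so a $V_h$ voter could in principle be redirected to any of exponentially many candidates. I would restore the bound by an exchange argument that exploits the fact that under $\Cequal$ bribing a voter costs $Q[i]$ regardless of the chosen target: if an optimal solution lets some $c^\ast$ overtake $h$ while $c^\ast$ retains \emph{no} un-bribed initial vote, then redirecting every bribe aimed at $c^\ast$ to a candidate $c'$ that does keep an initial vote (or, should $V=V_h$ so that none exists, to one fixed arbitrary candidate $\neq h$) leaves the score of $h$ unchanged and can only raise the challenger's score, at identical cost. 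Hence it is enough to test the at most $kn$ candidates receiving some initial vote, together with one fixed extra candidate.

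Second, for a fixed challenger $c$ I would compute the cheapest bribery making $c$ beat $h$. The key observation is that the quantity $\operatorname{sc}(h)-\operatorname{sc}(c)$ is moved by each single bribe by either $1$ or $2$: redirecting a $V_h$ voter that does not already vote for $c$ so that it drops $h$ and adopts $c$ gains $2$, whereas merely adding $c$ at a $V\setminus V_h$ voter, or merely dropping $h$ at a $V_h$ voter that already votes for $c$, gains $1$; in every case the cost is the per-voter value $Q[i]$, and the minimum cost to realise the desired effect at a given voter (or $\infty$ if it is infeasible for the given bribery action and $k$) is computable in polynomial time by the techniques of Mattei~\etal~\cite{mattei2013bribery} and Dorn and Kr\"uger~\cite[Theorem~10]{dorn2014hardness}. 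Minimising the total cost needed to push $\operatorname{sc}(h)-\operatorname{sc}(c)$ below $0$ then reduces to: guess the number $t$ of gain-$2$ bribes, take the $t$ cheapest gain-$2$ voters, and, if $2t$ still falls short of the required $\operatorname{sc}(h)-\operatorname{sc}(c)+1$, add the cheapest gain-$1$ voters for the remainder. Iterating $t$ over its $\bigO{n}$ possible values and ranging over all challengers yields the optimum, which is compared to $\budget$ to decide the instance.

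I expect the main obstacle to lie exactly at the two points where the non-negative analysis diverges from the frozen-$V_h$ argument of Theorem~\ref{thm:negativebase:p}: first, justifying that the per-challenger minimisation stays polynomial although $\operatorname{sc}(h)$ is now itself a decision variable and the moves come in two ``strengths'', which the guess-$t$-and-greedy step handles; and second, re-proving the reduction to polynomially many challengers via the exchange argument, where one must verify that redirecting a bribe to a new target preserves feasibility (a top-$k$ set containing the new challenger but not $h$) and costs the same under $\Cequal$. The underlying per-voter feasibility and cost computations are inherited unchanged from the constructive and negative cases and should be the least delicate part of the argument.
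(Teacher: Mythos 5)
Your overall architecture matches the paper's: reduce the power-of-two/global-order case of $\OKeff$ to $\OP$ by deleting the last $\log k$ issues, and for $k=\mathit{poly}(n,m)$ enumerate polynomially many challengers and, per challenger, combine two cost-sorted lists of voters ($V_h$ versus $V\setminus V_h$). Your guess-$t$-and-greedy step with gain-$1$/gain-$2$ moves is in fact a more explicit version of what the paper only gestures at (``a bit more complicated \dots can be handled easily''), and is correct as stated, up to one small omission: a $V_h$ voter for whom the gain-$2$ bribe (adopt $c$, drop $h$) is infeasible can still be worth bribing as a gain-$1$ move that merely drops $h$ without adopting $c$; your taxonomy lists gain-$1$ drops only for $V_h$ voters already voting for $c$.

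The genuine gap is in the challenger enumeration for $\OKeff$ with $k\geq 2$. Your exchange argument requires that every redirected voter admits a bribed CP-net whose top-$k$ set contains the new target $c'$ but not $h$; you flag this feasibility condition yourself but never discharge it, and it can fail. Concretely, if $c'$ and $h$ differ in exactly one issue and that issue is last in some voter's order, then their disagreement vectors in the induced lexicographic order differ only in the least significant bit, so they always occupy adjacent ranks $2r-1$ and $2r$; for even $k$ (already for $k=2$) the top-$k$ block consists of complete such pairs, so this voter can never rank $c'$ among his top $k$ without also ranking $h$ there. Hence redirecting to an initially-voted $c'$---and a fortiori to your ``fixed arbitrary candidate $\neq h$'' fallback---may fail to preserve the score relation, and the polynomial bound on challengers is unproven (for $\OP$, where $k=1$, any target works and your argument is sound). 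The paper closes exactly this hole with one additional designated challenger: the candidate with the complementary assignment to $h$ in \emph{every} issue. Making that candidate the top candidate of any voter forces $h$ to the very last rank $2^m$, so every voter can be bribed to vote for it but not for $h$, which makes the redirection universally feasible. Adding this complementary candidate to your challenger list (together with the missing gain-$1$ option above) repairs your proof and brings it into line with the paper's.
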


\begin{proof}
Corollary~\ref{cor:positivebase:p} can be shown by small adjustments to the proof of Theorem~\ref{thm:negativebase:p}, where the analogous negative cases are covered. For voting rule $\OKeff$ and the case that $k$ is polynomial in $n$ and $m$, the briber is now able to bribe the voters in $V_h$ freely, therefore we have to build a second similar bribery-costs-sorted list of voters in $V_h$. Summing up the cheapest of those costs to make a specific candidate a winner of the election is a bit more complicated, since a bribery of a voter in $V_h$ could change this voter to not vote for $h$ any longer. But this can be handeled in time $\textit{poly}(n,m)$ easily. As potential winning candidates we try again each candidate which did get at least one vote initially and additionally we try the candidate which has a complementary assignment to $h$ in each issue, because each voter can be bribed to vote for him but not for $h$.

The remaining case of $\OKeff$ -- with $k$ being a power of $2$ and a global order over the issues for all voters given -- coincides with $\OP$ again and the same arguments as used in the proof of Theorem~\ref{thm:negativebase:p} can be used. This time we can use the result of Mattei~\etal~\cite[Theorem~$8$]{mattei2013bribery} for solving the constructive, non-negative case. 
\end{proof}

Interestingly, the voting rule $\OV$ is another special case of $\OK$ which can be evaluated and solved in polynomial time. 

\begin{theorem}\label{thm:negative:ov}
$(OV, A, C)$-$\langle\negative\rangle$-\destBrib  with bribery action $A\in\{IV, DV, IV+DV\}$ and cost scheme $C\in\{\Cequal, \Cflip, \Clevel, \Cany, \Cdist\}$ is in \Pclass{}.
\end{theorem}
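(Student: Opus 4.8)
The plan is to exploit the very special shape of $\OV$. Since $\OV$ is $\OK$ with $k=2^m-1$, each voter gives a point to every candidate except the single least preferred one; equivalently, each voter \emph{vetoes} exactly one candidate, namely the unique worst outcome of their acyclic CP-net, which is obtained in polynomial time by the obvious analogue of the optimal sweep (pick, issue by issue in a topological order, the \emph{less} preferred value given the already-fixed parents). Writing $\mathrm{veto}(c)$ for the number of voters vetoing $c$, the score of $c$ is $n-\mathrm{veto}(c)$, so the winners are precisely the least-vetoed candidates, and $h$ is kept out of the winner set exactly when, after bribing, some $c\neq h$ satisfies $\mathrm{veto}(c)<\mathrm{veto}(h)$. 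A single bribe moves one voter's veto from one candidate to another. First I would compute all initial counts $\mathrm{veto}_0(\cdot)$ (at most $n$ candidates are vetoed at all) and test whether $h$ is already a non-winner, which settles the instance with cost $0$.

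The reason $\OV$ is tractable while $\OP$ is \NP-complete is the following asymmetry: to defeat $h$ it suffices to pick one \emph{champion} $c^\ast\neq h$ and push $\mathrm{veto}(c^\ast)$ below $\mathrm{veto}(h)$, and lowering $\mathrm{veto}(c^\ast)$ only requires redirecting voters who currently veto $c^\ast$ to veto \emph{anything else}---no coordination onto a common target is needed, so the minimum cost is simply a sum of cheapest independent redirections. I would therefore enumerate $c^\ast$ over the $\le n$ currently-vetoed candidates (and, in the non-negative case, one representative unvetoed candidate, all of which are interchangeable). For each champion I need, per voter $v$, the minimum cost of changing $v$'s vetoed candidate away from a given $X$, and, for the non-negative case, the minimum cost of changing it \emph{to} $h$. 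For $\Cequal$, $\Cflip$, $\Clevel$ and $\Cany$ these single-voter costs follow directly from counting or weighting the cp-statements whose flip alters the worst outcome, and for $\Cdist$ they are obtained by the procedure of Mattei~\etal~\cite[Theorem~3]{mattei2013bribery} applied to the top candidate induced after the flip; the bribery action $A\in\{\IV,\DV,\IDV\}$ merely restricts which cp-statements may be flipped, so an infeasible redirection is assigned infinite cost.

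In the \negative{} case $\mathrm{veto}(h)$ can never be increased (a voter may not be made to newly veto $h$) and there is no reason to decrease it, so I would fix the target at $T=\mathrm{veto}_0(h)$ and, for each champion, sum the $\mathrm{veto}_0(c^\ast)-T+1$ cheapest redirections of $c^\ast$'s vetoers to third candidates. In the non-negative case raising $\mathrm{veto}(h)$ becomes available, so for a fixed champion each voter vetoing $c^\ast$ may instead be sent to $h$ (improving the gap $\mathrm{veto}(h)-\mathrm{veto}(c^\ast)$ by $2$) and each voter vetoing a third candidate may be sent to $h$ (improving the gap by $1$); achieving the required gap increase $N=\mathrm{veto}_0(c^\ast)-\mathrm{veto}_0(h)+1$ at minimum cost out of such increment-$1$ and increment-$2$ items is solved by a trivial dynamic program over the achieved increase (bounded by $n$). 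Taking the minimum over all champions and comparing with $\beta$ decides the problem, and every step is polynomial.

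I expect the global optimization to be genuinely easy once the veto viewpoint is fixed; the main obstacle is the careful bookkeeping in the non-negative case, where raising $\mathrm{veto}(h)$ and lowering $\mathrm{veto}(c^\ast)$ interact and one must ensure that each voter is used for at most one move and that the champion's own count is never inadvertently increased. A secondary, scheme-by-scheme technicality is to verify that the single-voter redirection costs are indeed polynomially computable and correctly encode feasibility under each action $A$---most delicately for $\Cdist$, where the charge depends on the top candidate produced by the flip rather than on the changed veto itself.
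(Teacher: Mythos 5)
Your architecture is genuinely different from the paper's, and at the level of veto counting it is sound. The paper's proof splits on $n<2^m$ versus $n\ge 2^m$: in the first case the pigeonhole principle guarantees an unvetoed candidate, so the instance is decided immediately by whether $h$ receives a veto (negative case) or by bribing the cheapest voter to cast a veto against $h$ (non-negative case); in the second case the number of candidates is at most $n$, and the paper simply runs the known polynomial-time algorithm of Dorn and Kr\"uger~\cite[Theorem~12]{dorn2014hardness} for the \emph{constructive} negative $(\OV,A,C)$ problem on every candidate. You instead build a self-contained algorithm: enumerate a champion $c^\ast$ among the at most $n$ vetoed candidates (plus one unvetoed representative in the non-negative case), and combine cheapest independent redirections, with the increment-$1$/increment-$2$ dynamic program in the non-negative case. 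This avoids the case split entirely---your champion enumeration subsumes the pigeonhole case, and the negative no-instance with $\mathrm{veto}(h)=0$ falls out automatically since every champion then needs more redirections than it has vetoers---and your observation that lowering $\mathrm{veto}(c^\ast)$ requires no coordination onto a common target is exactly the right structural explanation of why $\OV$ is tractable while $\OP$ is \NP-complete. Your move analysis (vetoers of $c^\ast$ to a third candidate or to $h$, vetoers of third candidates to $h$, all other bribes useless or harmful) is complete, since each voter controls exactly one veto.

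The genuine gap is the per-voter cost oracle, which you defer as a ``scheme-by-scheme technicality'' but which is where all the action- and scheme-specific work lives; the paper discharges it wholesale by invoking the constructive algorithm as a black box. Concretely, for \Cdist your recipe (``the procedure of Mattei \etal~\cite[Theorem~3]{mattei2013bribery} applied to the top candidate induced after the flip'') does not match the problem: that procedure computes the cost of making a \emph{prescribed} candidate the new top, whereas here the constraint is on the new \emph{bottom} (the veto) and the top is unconstrained. The \Cdist charge of a bribery is determined by whatever top it induces, so the minimization ranges over all CP-net modifications whose worst candidate avoids a given set---including modifications that leave the top unchanged and hence cost $0$, which is precisely the phenomenon the paper exploits in Theorem~\ref{thm:neg:weig:OVDVCdist:P} for $\DV$ with \Cdist. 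Your sketch neither identifies which modification realizes the minimum nor argues that this coupled optimization is polynomial. Similarly, for \Cany (and in corner cases for \Cflip) restricting attention to single flips can be suboptimal or insufficient: a single allowed flip may send the veto onto a forbidden candidate ($h$ in the negative case, or back toward $c^\ast$) while a combination of flips does not, so ``counting or weighting the cp-statements whose flip alters the worst outcome'' is not yet an algorithm. Until these oracles are established for all combinations of $A$ and $C$---or replaced, as in the paper, by the citation to the constructive-case result---your proposal is a correct reduction of the theorem to an unproved subroutine rather than a complete proof.
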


\begin{proof}
To solve the negative version we distinguish two cases. First, in the case $n<2^m$ there has to be at least one candidate who did not obtain a veto, by the pigeonhole principle. If $h$ gets at least one veto, it is a yes-instance. Otherwise it would be sufficient to bribe the cheapest voter to cast his veto to $h$. But as this is not allowed, it is a no-instance.

Second, if $n\geq 2^m$, the number of candidates is linear in the input size, so one can calculate the costs to let a candidate win for each candidate by applying the polynomial-time algorithm by Dorn and Krüger~\cite[Theorem 12]{dorn2014hardness} for the constructive negative $(OV, A, C)$-\textsc{Bribery} case which is formulated for the co-winner case but can easily be adapted for the unique-winner case. The bribery with overall minimum costs can then be applied if the budget is sufficient.

The different bribery actions can just prevent the briber to bribe specific voters in some issues, therefore the theorem holds for each bribery action. For the non-negative version, we note that in the first case with $n < 2^m$, it is now sufficient (and possible) to bribe the cheapest voter to cast his veto against~$h$. The rest remains the same.
\end{proof}

Due to space constraints we omit further examples of special cases of $\OKeff$ which are solvable in polynomial time. We just remark that the combinations of $\OP$, $\IDV$, and $\Cflip,\Clevel,\Cdist$, and $\OP,\IV,\Cflip$, each for the unweighted, non-negative case and if all entries of the cost vector are the same, can be solved in polynomial time, too (see Section \ref{sec:CostVectorSameValue}). This is in line with the observations of Mattei~\etal~\cite[Theorem 7]{mattei2013bribery} and Dorn and Krüger~\cite[Theorem 7]{dorn2014hardness} that sometimes the bribery problem can be solved in polynomial time when the cost vector has only identical entries.

\section{The weighted case}
In this section, we consider the case of weighted voters, which turns out to be \NP-complete for almost all combinations---with two exceptions. The reductions we give here are all from the \Knapsack problem.

\begin{theorem}\label{thm:negative:weighted:OPOK:npc}
\weigh-$(D, A, C)$-$\langle\negative\rangle$-\destBrib is \NP-complete with voting rule $D \in \{\OP, \OKeff\}$, bribery action $A\in\{\IV,\DV,\IDV\}$, and cost scheme $C\in\{\Cequal, \Cflip, \Clevel, \Cany, \Cdist\}$.
\end{theorem}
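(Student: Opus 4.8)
The plan is to reduce \Knapsack to \weigh-$(\OP, A, C)$-$\langle\negative\rangle$-\destBrib for every admissible action $A\in\{\IV,\DV,\IDV\}$ and every cost scheme $C$, and then to invoke the fact that $\OP$ is the special case of $\OKeff$ with $k=1$ (the same observation already used at the end of the proof of Theorem~\ref{thm:negativebase}) so that the $\OKeff$ cases follow for free. Since membership in \NP is immediate, it suffices to establish \NP-hardness, and since the heavy voters below will be made unbribable through the cost vector, a single instance will serve both the negative and the non-negative variant at once.

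Given a \Knapsack instance with objects $(w_i,v_i)$ for $1\le i\le n$ and integers $b,k$, I would build a profile over a constant number of binary issues, using just $\Issue_1,\Issue_2$ (keeping everything polynomial). Write $V=\sum_i v_i$ and fix three distinct candidates: the target $c=(\issue_1,\issue_2)$, a dummy $d=(\overline{\issue_1},\issue_2)$ that agrees with $c$ except in $\Issue_1$, and the hated candidate $h=(\issue_1,\overline{\issue_2})$. The profile consists of one heavy \emph{$h$-voter} of weight $V+k$ with top candidate $h$, one heavy \emph{$c$-voter} of weight $V+1$ with top candidate $c$, and, for each object $i$, one \emph{object voter} of weight $v_i$ whose top candidate is $d$ and whose CP-net is designed so that exactly one admissible flip turns its top candidate into $c$. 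The cost vector carries the \Knapsack weights: I set $Q[i]=w_i$ for object voters and $Q=\budget+1$ for the two heavy voters, which makes them unbribable under every scheme in both scenarios. Initially $h$ scores $V+k$, $c$ scores $V+1$, $d$ scores $V$, and all other candidates score $0$, so $h$ lies in the winner set; moving a subset $U'$ of object voters onto $c$ raises its score to $V+1+\sum_{i\in U'}v_i$, while no other candidate can be pushed beyond $V<V+k$. Hence $h$ leaves the winner set exactly when $\sum_{i\in U'}v_i\ge k$.

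The one genuinely fiddly step, on which I would spend the most care, is making the base cost of the single vote-changing flip of an object voter a fixed constant across all five cost schemes and all three bribery actions, so that $Q[i]=w_i$ faithfully encodes the object weight. For \IV{} and \IDV{} I keep $\Issue_1$ independent with statement $\overline{\issue_1}>\issue_1$, so the flip to $\issue_1>\overline{\issue_1}$ is a single independent flip; for \DV{} I instead let $\Issue_1$ depend on $\Issue_2$ and flip the cp-statement $\issue_2:\overline{\issue_1}>\issue_1$, which is a single dependent flip still producing top candidate $c$. Under \Cequal{} and \Cflip{} this flip costs $1$; under \Clevel{} it costs $1$ in both layouts (level $1$ of a one-level net, respectively level $2$ of a two-level net, since $k+1-\mathrm{level}=1$ in each case); under \Cany{} I make this flip cost one unit and every other flip prohibitively expensive; and under \Cdist{} I fix each (identical) object voter's order so that $c$ sits directly below $d$, giving rank cost $1$. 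The per-voter base cost is thus the same constant $\kappa=1$ in every case, so with $Q[i]=w_i$ and $\budget=b$ the cost of bribing a set $U'$ of object voters is precisely $\sum_{i\in U'}w_i$, and reachability of $c$ under the chosen action is guaranteed because $c$ differs from $d$ only in the issue we are allowed to flip.

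Finally I would verify both directions. A \Knapsack solution $U'$ translates directly into bribing exactly the object voters in $U'$: the cost is $\sum_{i\in U'}w_i\le b=\budget$, and $c$ reaches $V+1+\sum_{i\in U'}v_i\ge V+1+k>V+k$, removing $h$ from the winner set. Conversely, any successful bribery can only touch object voters, as the two heavy voters carry cost-vector entry $\budget+1$; and since no candidate other than $c$ can be lifted past $h$, it must move some subset $U'$ of object voters onto $c$ with $\sum_{i\in U'}v_i\ge k$ at total cost $\sum_{i\in U'}w_i\le\budget=b$, i.e.\ a \Knapsack solution. The identical instance settles both the negative case (the object voters already vote against $h$, so redistributing them onto $c$ is permitted) and the non-negative case (the $h$-voter is kept unbribable through its cost-vector entry), which is exactly why the statement can be phrased with $\langle\negative\rangle$; the \OKeff{} results then follow by the $k=1$ specialization.
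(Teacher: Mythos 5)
Your proposal is correct and follows essentially the same route as the paper's own proof: a reduction from \Knapsack over two binary issues, with object weights $v_i$ as voter weights and $w_i$ as cost-vector entries, heavy voters made unbribable via entries of $\budget+1$ (which simultaneously handles the negative and non-negative variants), a single unit-cost flip verified across all five cost schemes (including the issue-order trick for $\Cdist$ and the dependency trick for $\DV$), and the $\OKeff$ cases obtained from $\OP$ as the $k=1$ special case. The only differences are cosmetic---slightly different weights ($V+k$ and $V+1$ versus $l+k-1$ and $l$) and the reversed direction of the $\DV$ dependency---neither of which affects the argument.
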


\begin{proof}
For the voting rule $\OP$, we use a reduction from \Knapsack. Given a \Knapsack instance $(\{(w_1, v_1), \ldots, (w_n, v_n)\}, k, b)$ we construct a voting scheme such that a successful bribery against the candidate $h$ is possible if and only if the given Knapsack problem can be solved.

First, we use two issues $\Issue_1$ and $\Issue_2$, set $h = \overline{\issue_1}\,\overline{\issue_2}$, and the budget $\budget=b$. We create one voter preferring $h$ with weight $l+k-1$, where $l=\sum_{i=1}^n v_i$. We set the entry of the cost vector for this voter to $\budget+1$ in order to make him unbribable (just to make this proof hold for the non-negative case, too).

 For every object $(w_i, v_i)$ we add a voter preferring $\issue_1\overline{\issue_2}$ to all other candidates. This voter is weighted by $v_i$ and his entry in the cost vector is set to $w_i$. Last, we add a single voter of weight $l$ preferring $\issue_1\issue_2$ to all other candidates. Since this candidate should win the election, we make this voter unbribable by setting the entry in the cost vector to $\budget+1$.
 
For the bribery actions $\IV$ and $\IDV$ all issues are independent. For the bribery action $\DV$ we let the preferences of the issue $\Issue_2$ depend on the assignment of issue $\Issue_1$ for each voter created for an object $(w_i,v_i)$. Such a voter will therefore prefer $\issue_1$ independently over $\overline{\issue_1}$, and the preference for $\Issue_2$ will be $\issue_1:\overline{\issue_2}>\issue_2,~ \overline{\issue_1}:\issue_2>\overline{\issue_2}$.

By construction, only voters created for objects can be bribed, and only changing their favorite candidate to $\issue_1\issue_2$ is helpful here. Each bribery of one such voter results in a value of $1$ for each cost scheme, which is then multiplied by the entry of the cost vector. Note that a cost of $1$ is obvious in all cases except $\Cdist$. Here, we have to ensure that the target of our bribery is on the second position. As the prefered candidate is $\issue_1\overline{\issue_2}$ and our potential winner is $\issue_1\issue_2$, we have to set $\Issue_1$ before $\Issue_2$ in the order on the issues. The claimed cost of $1$ for this bribery action then holds. It is easy to see that there must exist a solution to the underlying \Knapsack instance in order to be able to prohibit the candidate~$h$ from winning. 

Since the briber cannot bribe the single voter voting for $h$ by construction, this reduction holds for the negative case as well. Because $\OP$ is a special case of $\OKeff$, this result automatically carries over.
\end{proof}

The main idea of this reduction can be adjusted to show \NP-completeness for the voting rules $\OV$ and $\SM$, too.

\begin{theorem}\label{thm:weig:OVmain}
\weigh-$(\OV, A, C)$-$\langle\negative\rangle$-\destBrib with a bribery action $A\in\{\IV,\DV,\IDV\}$ and cost scheme $C\in\{\Cequal, \Cflip, \Clevel, \Cany, \Cdist\}$ is \NP-complete, except for the combination $A=\DV$ and $C=\Cdist$.
\end{theorem}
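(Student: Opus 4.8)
The plan is to reduce from \Knapsack in the spirit of Theorem~\ref{thm:negative:weighted:OPOK:npc}, but dualised for the veto rule; \NP-membership is immediate, so only \NP-hardness is needed. The guiding observation is that under \OV every voter awards a point to all candidates except the one it ranks last, so a candidate's score equals the total voter weight minus the weight of the voters vetoing it. Hence $h$ is a winner iff it has the \emph{minimum} veto weight, and the briber's goal is to create some candidate with strictly smaller veto weight than $h$. I would work over only $m=2$ issues $X_1,X_2$ (four candidates), placing the instance in the regime $n\ge 2^m$ in which, unlike the unweighted case of Theorem~\ref{thm:negative:ov}, weights make the problem hard.

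For the \negative case I would use a ``reduce-a-rival'' construction. Put $h=\overline{x_1}\,\overline{x_2}$ and, for each object $(w_i,v_i)$, create a voter of weight $v_i$ and cost-vector entry $w_i$ whose last candidate is the fixed rival $g=x_1x_2$; set $l=\sum_i v_i$. Add heavy unbribable voters (cost $\budget+1$) so that $h$ is vetoed with weight $l-k+1$, while $g$ (veto weight $l$) and the two remaining candidates start strictly above that, making $h$ a winner. A legal bribery can only move an object voter's veto off $g$ (in the \negative case it can never be steered onto $h$), lowering $g$'s veto weight by $v_i$; thus $h$ ceases to be a winner iff the moved objects satisfy $\sum v_i > l-(l-k+1)=k-1$, i.e.\ $\sum v_i\ge k$, at total cost $\sum w_i\le\budget=b$. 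This is exactly the \Knapsack instance.

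The routine but necessary part is to certify that each ``move one veto off $g$'' bribery costs exactly $1$ (before multiplication by $Q[i]$) in every admissible combination. For \IV I make both issues independent and flip one; for \DV I make the flipped issue dependent and check that its level gives cost $1$ under \Clevel; \IDV subsumes both; \Cequal, \Cflip, \Cany are immediate. For \Cdist the charge is governed by the \emph{new top} candidate, so I would choose the per-voter issue order (permitted by the problem definition) so that the single flip promotes the original second-ranked candidate to the top, giving cost $1$ while sending the veto to a neutral candidate rather than to $h$. The combination $A=\DV$, $C=\Cdist$ resists this calibration---changing only a dependent issue cannot realise the required unit distance---which is precisely why it is the stated exception and is instead solved in polynomial time in Theorem~\ref{thm:neg:weig:OVDVCdist:P}.

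The main obstacle is the non-negative case, where the reduce-a-rival construction is \emph{not} correct: the briber may now redirect an object voter's veto straight onto $h$, which simultaneously lowers $g$ and raises $h$, closing the gap by $2v_i$ instead of $v_i$ and so halving the effective threshold. To repair this I would give a dual ``raise-$h$'' construction. Keep $h=\overline{x_1}\,\overline{x_2}$, fix a rival $g=x_1x_2$ at veto weight $W+k-1$ via an unbribable voter, put $h$'s unbribable baseline at $W$, and give the two remaining candidates huge unbribable veto baselines so that lowering them never helps; the object voters additionally veto one of these two, chosen adjacent to $h$ (differing in a single issue) so that one flip redirects the veto onto $h$. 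Now the only productive action is to raise $h$'s veto weight above $g$'s, requiring objects of total value $\sum v_i>(W+k-1)-W=k-1$, i.e.\ $\sum v_i\ge k$, again at cost $\le b$. Because this construction crucially needs the freedom to veto $h$, its \negative version is vacuously unsolvable, which is why the two scenarios are handled by two separate but structurally symmetric reductions; the cost-and-action bookkeeping of the previous paragraph carries over verbatim, once more excluding $(\DV,\Cdist)$.
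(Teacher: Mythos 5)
Your proposal is correct and takes essentially the same route as the paper's proof: a reduction from \Knapsack over two issues in which each object $(w_i,v_i)$ becomes a voter of weight $v_i$ with cost-vector entry $w_i$, heavy calibration voters are made unbribable via cost entries $\budget+1$, every useful bribery is a single veto-moving flip certified to cost $1$ under each scheme (with dependent $\Issue_2$ for \DV and a suitable per-voter issue order for \Cdist), and the $(\DV,\Cdist)$ combination is excluded for the same reason. The only deviation is cosmetic: in the non-negative case the paper leaves the object vetoes on $c^*$ and exploits the double effect of moving them onto $h$ (calibrating the $h$-vetoer's weight to $2l-2k+1$ so the threshold is still $k$), whereas you park them on a neutral candidate adjacent to $h$ so that only $h$'s veto weight rises---both calibrations work.
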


\begin{proof}
We prove Theorem~\ref{thm:weig:OVmain} in a similar manner as Theorem~\ref{thm:negative:weighted:OPOK:npc} by reduction from \Knapsack. Let us start with the non-negative case. For this we again need the two issues $\Issue_1$ and $\Issue_2$, declare candidate $\overline{\issue_1}\,\overline{\issue_2}$ to be the disliked candidate~$h$, and set the budget to $\budget=b$. Let us start with the allowed bribery action $\IV$. First we create 2 voters, weighted by $2l$ with $l=\sum_{i=1}^n v_i$. One of them is casting his veto to candidate $\overline{\issue_1}\issue_2$, the other one to $\issue_1\overline{\issue_2}$. Furthermore we create one voter weighted by $l$ casting his veto to $c^*=\issue_1\issue_2$ and a last one of this kind weighted by $2l-2k+1$ vetoing for $h$. We fix all these four voters by setting their cost vector entry to $\budget+1$; this makes the dependencies of their issues unimportant.\\
Finally we create one voter for each object $(w_i,v_i)$ casting his veto to~$c^*$ with weight $v_i$ and set the entry of the cost vector for this voter to $w_i$. Since the briber can only bribe these object-voters, his only change consists in bribing voters with a total weight of at least $k$ to cast their veto to $h$ instead of to $c^*$. By doing so, $c^*$ will win with a score of $2l-k$, while $h$ will have a score of $2l-k+1$, and the last two candidates $2l$, each. Each such bribery will have a cost value of $1$ for the cost schemes $\Cequal,\Cflip,\Clevel,\Cdist$, and for $\Cany$ we can simply set the costs for the required flip to $1$. Together with the entries from the cost vector, a successful bribery can only be found within the given budget if and only if there exists a solution to the underlying \Knapsack instance.\\
Since there are no dependencies between the issues of the voters, this construction works for the bribery action $\IDV$, too. For $\DV$ we need to make some adjustments. Here we swap the role of the candidates $\issue_1\issue_2$ and $\overline{\issue_1}\issue_2$ and change the voters accordingly. The voters created for the objects $(w_i,v_i)$ have to be changed further, though. We set the preferences over issue $\Issue_1$ to be independent to $\issue_1>\overline{\issue_1}$ and for issue $\Issue_2$ to $\issue_1: \issue_2>\overline{\issue_2},~\issue_1:\overline{\issue_2}>\issue_2$. The weight and the entry in the cost vector of such a voter remain $v_i$ and $w_i$, respectively.
From here on, one can show the reduction by the same argumentation as before. 

This result can be deduced to the negative case by just some small adjustments of the weights of the unbribable voters. The only voter vetoing $h$ needs a weight of $l-k+1$, the unbribable voters vetoing $c^*$ are not needed any more, and the last two such voters vetoing the last two candidates need their weights to be lowered to $l$. So the briber has to bribe the voters casting their veto to one of the two latter candidates instead of $c^*$ (since changing them to $h$ is not allowed) to make $c^*$ the winner instead of $h$.
\end{proof}

\begin{theorem}\label{thm:neg:weig:SM}
\weigh-$(\SM, A, C)$-$\langle\negative\rangle$-\destBrib is \NP-complete for cost scheme $C\in\{\Cequal, \Cflip, \Clevel, \Cany, \Cdist\}$ and allowed bribery action $A\in\{\IV,\DV,\IDV\}$.
\end{theorem}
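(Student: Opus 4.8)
The plan is to reduce from \Knapsack, in the spirit of Theorems~\ref{thm:negative:weighted:OPOK:npc} and~\ref{thm:weig:OVmain}, but now exploiting that \SM{} decides each issue by a \emph{separate} weighted majority vote. Given a \Knapsack{} instance with objects $(w_i,v_i)$, target $k$ and bound $b$, write $l=\sum_i v_i$ and fix a common order of the issues for which the profile is $\mathcal{O}$-legal. The idea is to single out one \emph{pivotal} issue whose initial weighted majority coincides with $h$'s value there, and to calibrate the weights so that this majority can be overturned exactly by bribing a set $S$ of \emph{object voters} of total value $\sum_{i\in S}v_i\ge k$; once overturned, the \SM{}-winner differs from $h$ on the pivotal issue, hence $h$ is defeated. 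Object voter $i$ gets weight $v_i$ and cost-vector entry $w_i$, while all auxiliary voters are made unbribable via cost-vector entry $\budget+1$. The key calibration is to give the pivotal issue an initial majority margin of exactly $2k-1$ in favour of $h$'s value: after bribing $S$ the two sides differ by $2k-1-2\sum_{i\in S}v_i$, which is always odd and thus never zero (so no tie-breaking is needed), and the majority flips iff $\sum_{i\in S}v_i\ge k$. Since every object bribe costs a fixed, voter-independent amount $d$ under the chosen scheme, the total cost is $d\sum_{i\in S}w_i$, and setting $\budget=d\cdot b$ makes a budget-respecting defeat of $h$ equivalent to solving \Knapsack.

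For \IV{} and \IDV{} I would take the pivotal issue to be $\Issue_1$ and keep all issues independent: object voters prefer $h$'s value on $\Issue_1$, one unbribable voter of weight $l-2k+1$ prefers the opposite value, and the other issues are pinned to $h$'s values by heavy unbribable voters. For \DV{} the pivotal flip must occur on a dependent issue, so I would let $\Issue_2$ depend on $\Issue_1$, force $\Issue_1=h_1$, and have the object voters carry the pivotal preference in the $h_1$-conditional cp-statement of $\Issue_2$ — precisely the statement that \SM{} reads after fixing $\Issue_1=h_1$; bribing it is \DV-admissible. In either case a single object bribe has constant cost $d$: $d=1$ for \Cequal{} and \Cflip; $d=1$ for \Clevel, since the bribed issue sits at the deepest level of the relevant CP-net, so its weight (number of levels $+\,1-\mathrm{level}$) equals $1$; a freely chosen value for \Cany; and, for \Cdist, the fixed rank of the bribed voter's new top candidate, which is identical across the structurally identical object voters and is therefore absorbed into $\budget$ (one may even make it $1$ by ordering the pivotal issue last in the per-voter \Cdist-order).

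The delicate point is the \negative{} case, where a voter currently voting for $h$ may not be pushed away from $h$. I would guarantee that no object voter has $h$ as its top candidate by adding one low-significance distinguishing issue $\Issue_3$, ordered last in $\mathcal{O}$ and pinned to $h_3$ by unbribable voters, on which every object voter prefers the value opposite to $h_3$. Then each object voter agrees with $h$ on the pivotal issue and all of its ancestors — so it is decisive in the pivotal majority and its bribe genuinely changes what \SM{} reads — yet its overall top candidate differs from $h$ on $\Issue_3$. Hence the object voters never vote for $h$, and bribing them is permitted in both the \negative{} and the non-negative scenario, so the same construction yields the $\langle\negative\rangle$ statement. \NP-hardness then holds for all five cost schemes and all three bribery actions, and membership in \NP{} is immediate, giving \NP-completeness.

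I expect the main obstacle to be the interaction between the sequential, \emph{conditional} nature of \SM{} and the two most restrictive ingredients, \DV{} and \Cdist. First, because every voter participates in every per-issue majority, a heavy auxiliary voter introduced to pin one issue simultaneously perturbs the pivotal majority; I must decouple the issues, e.g.\ by introducing such voters in pairs with opposite preferences on the pivotal issue so that their contributions there cancel while still dominating the issue they are meant to fix. Second, under \DV{} with \Cdist{} the pivotal dependent cp-statement must lie on the bribed voter's top-path — otherwise \SM, reading it conditioned on the fixed prefix, is unaffected by a top-changing bribe — while the \negative{} restriction simultaneously forbids that top from equalling $h$ (this is exactly the combination that had to be excluded for \OV{} in Theorem~\ref{thm:weig:OVmain}). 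Reconciling these two demands is what the extra distinguishing issue $\Issue_3$, together with a \Cdist-order consistent with the dependency $\Issue_1\to\Issue_2$, is designed to achieve, and getting this corner right is the crux of the proof.
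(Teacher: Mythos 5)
Your proposal is correct and follows essentially the same route as the paper's proof: a reduction from \Knapsack{} with object voters of weight $v_i$ and cost-vector entry $w_i$, auxiliary voters made unbribable via cost entries exceeding the budget, and weights calibrated so the pivotal sequential-majority margin is the odd number $2k-1$, which flips exactly when the bribed voters' total value reaches $k$, at a fixed per-voter cost under every scheme. Your treatment of the delicate $\DV$/$\Cdist$ corner---adding a low-significance extra issue so that the pivotal dependent cp-statement lies on the object voters' top path (making the flip cost a fixed constant absorbed into $\beta$) while simultaneously keeping their top candidates distinct from $h$ for the \negative{} case---matches the paper's own modification, where that flip costs $2w_i$ and the budget is doubled.
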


\begin{proof}
We prove Theorem~\ref{thm:neg:weig:SM} by reduction from \Knapsack.
For this we again need the two issues $\Issue_1$ and $\Issue_2$, declare candidate $\overline{\issue_1}\,\overline{\issue_2}$ to be the disliked candidate~$h$, and set the budget to $\budget=b$. We create a single voter preferring $h$, weighted by $2l$, with $l=\sum_{i=1}^n v_i$. We set the entry of the cost vector for this voter to $2\budget+1$ in order to make him unbribable (to make this proof hold for the non-negative case, too). We create one similar voter, weighted by $2l$, and a cost vector entry of $2\budget+1$, who prefers $\overline{\issue_1}\issue_2$. 

Because of the sequential fixing of assignments to the issues we swap the roles of the candidates $\issue_1\issue_2$ and $\issue_1\overline{\issue_2}$. So for every object $(w_i, v_i)$, we add a voter preferring $\issue_1\overline{\issue_2}$. This voter is weighted by $v_i$ and his entry in the cost vector is set to $w_i$. We create one additional voter preferring $\issue_1\overline{\issue_2}$, weighted by $l$ and with a cost vector entry of $2\budget+1$. Last, we add a single voter of weight $2l-2k+1$ preferring $\issue_1\issue_2$ and make this voter unbribable by setting the entry in the cost vector to $2\budget+1$. 

For the bribery actions $\IV$ and $\IDV$, all issues are independent. For the bribery action $\DV$, we need to make the preferences of the issue $\Issue_2$ depending on the assignment of issue $\Issue_1$ for each voter created for an object $(w_i,v_i)$. Such a voter will therefore prefer $\issue_1$ independently over $\overline{\issue_1}$, and the preference for $\Issue_2$ will be $\issue_1:\issue_2>\overline{\issue_2}$, $\overline{\issue_1}:\overline{\issue_2}>\issue_2$. In both variants we set the issue~$\Issue_1$ to be the leading one in the order~$\mathcal{O}$ needed to evaluate the voting rule~$\SM$. 

Again, the briber can only bribe those voters who were created for objects, and only changing their cp-statement $\overline{\issue_1}: \overline{\issue_2} > \issue_2$ to $\issue_1: \issue_2 > \overline{\issue_2}$ is helping here. Each bribery of one such voter will result in a value of $1$ for almost each cost scheme, which is then multiplied by the entry of the cost vector. The only exception is the cost scheme~$\Cdist$, where such a change would be free.  For this combination of cost scheme and bribery action we need a modification we will discuss later. 

This leads to the following votes in the sequential voting of $\SM$. Note that the assignment of $\Issue_1$ is fixed to the winner of the first round to all voters.

\begin{center}
\begin{tabular}{l|c|c}
 & ~without bribery~ & ~~~with bribery of total weight $\varphi$~~~ \\ \hline
$\issue_1$\, & $4l-2k+1$ & $4l-2k+1$ \\
$\overline{\issue_1}$ & $4l$ & $4l$ \\\hline
$\issue_2$ & $4l-2k+1$ & $4l-2k+1+\varphi$ \\
$\overline{\issue_2}$ & $4l$ & $4l-\varphi$ 
\end{tabular} 
\end{center}

So only with a bribery of voters with a total weight $\varphi \geq k$, the candidate $\overline{\issue_1}\issue_2$ will win the election. In every other case the candidate $h=\overline{\issue_1}\,\overline{\issue_2}$ wins. It is easy to see that this establishes the equivalence of \Knapsack{} and {\sc Weighted-}$(\SM, A, C)$-\destBrib.

For the combination of~$\Cdist$ and~$\DV$ we need to introduce a new issue~$\Issue_3$. For each voter with a cost vector entry of $2\budget+1$, we add the independent cp-statement~$\overline{\issue_3} > \issue_3$; for all the other voters (created for an object) we add the independent cp-statement~$\issue_3>\overline{\issue_3}$ and change the preference for $\Issue_1$ to $\overline{\issue_1}>\issue_1$. The needed order~$\mathcal{O}$ over the issues is extended to $\Issue_1 > \Issue_2 > \Issue_3$. The candidate~$\overline{\issue_1}\,\overline{\issue_2}\,\overline{\issue_3}$ serves as the disliked candidate~$h$, and, since the broad majority of the voters votes for~$\overline{\issue_3}$ over~$\issue_3$, he wins the unbribed election. To prevent this, the briber can, again, only bribe an appropriate subsets of voters---all created for an object---to flip the cp-statement~$\overline{\issue_1}:\overline{\issue_2}>\issue_2$. This will rise the candidate~$\overline{\issue_1}\issue_2\issue_3$ from the third to the top position for those voters, resulting in a total cost of~$2\cdot w_i$ to bribe such a voter~$v_i$. Because of this factor of~$2$ we need to double the budget, too. If there is a solution to the corresponding instance of \Knapsack{}, there exists a bribery letting candidate~$\overline{\issue_1}\issue_2\overline{\issue_3}$ win the election. 

By this modification, we ensure, apart from introducing costs to the needed bribery, that, again, no voter initially voting for~$h$ can be bribed. Therefore this reduction holds for the negative case as well. 
\end{proof}

In contrast, the combination of $\OV$ with $\Cdist$ can be solved in polynomial time by a greedy-algorithm because all possible bribery actions are for free.

\begin{theorem}\label{thm:neg:weig:OVDVCdist:P}
\weigh-$(\OV, \DV, \Cdist)$-$\langle\negative\rangle$-\destBrib is solvable in polynomial time.
\end{theorem}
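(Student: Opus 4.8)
The plan is to isolate the single feature that separates this case from the \NP-complete weighted \OV cases of Theorem~\ref{thm:weig:OVmain}, namely that under the combination \DV and \Cdist every bribery from which the briber could profit is \emph{free}. Recall that under \OV a (weighted) voter contributes its weight to every candidate except its bottom (vetoed) one, so a candidate wins iff it carries the minimum total veto weight; preventing $h$ from winning therefore means producing some candidate $c^\ast\neq h$ with strictly smaller veto weight than $h$. Under \Cdist the cost charged for bribing a voter is the original rank of the \emph{top} candidate it ends up with, minus one, so a bribery costs $0$ precisely when it leaves the voter's top candidate unchanged. The whole argument rests on the observation that any change of a voter's vetoed (bottom) candidate that \DV permits can be realized by such a top-preserving bribery.

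First I would prove this free-move lemma. Fix a voter together with the issue order underlying its ranking under \Cdist; the top candidate is obtained by following the best value of the first issue (which is independent, since nothing precedes it) and then the conditionally best value of each subsequent issue, while the vetoed candidate follows the conditionally \emph{worst} values. Let $\Issue_j$ be the earliest dependent issue in this order. All of its parents precede it, and by minimality they are independent, so each of them takes its unique best value in the top candidate and its worst value in the vetoed candidate; these differ. Consequently the parent assignment selecting the cp-statement of $\Issue_j$ in the vetoed candidate is distinct from the one used in the top candidate, and flipping that cp-statement---a legal \DV move---alters the vetoed candidate while leaving the top candidate, hence the \Cdist cost, untouched. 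Thus any voter whose CP-net contains at least one dependent issue can have its veto moved off its current candidate at no cost, whereas a voter with only independent issues cannot be bribed under \DV at all. In the \negative case one additionally has to route the displaced veto onto a candidate other than $h$; since several free flips are typically available this is a finite per-voter check, and a voter all of whose free moves land on $h$ is simply treated as unmovable.

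Given the lemma, the optimisation collapses: briberies of distinct voters are independent and all cost $0$, so for a fixed rescue candidate $c^\ast\neq h$ the briber can lower the veto weight on $c^\ast$ exactly to the combined weight of the voters vetoing $c^\ast$ that have no dependent issue (in the \negative case, those that additionally cannot be routed off $c^\ast$ without hitting $h$), and in the non-negative case may moreover pile movable vetoes onto $h$. The instance is a yes-instance iff for some $c^\ast$ the resulting veto weight of $c^\ast$ is strictly below that of $h$. It remains to bound the number of candidates that must be tested, which I would do exactly as in Theorem~\ref{thm:negative:ov}: if $n\ge 2^m$ there are at most $2^m\le n$ candidates and one enumerates them all, computing the quantities above for each in polynomial time; if $n<2^m$ then by the pigeonhole principle some candidate carries no veto at all, which decides the instance directly (and in the non-negative case one only has to check whether a single veto can be pushed onto $h$).

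The main obstacle is the free-move lemma together with the bookkeeping it hides: one must argue carefully that the \emph{earliest} dependent issue always yields a top-preserving flip, and, for the \negative variant, that the set of bottom candidates reachable by free \DV moves is rich enough to avoid $h$ (or else certify the voter as unmovable). Everything after that---summing the weights, the $n\ge 2^m$ versus $n<2^m$ split, and the greedy choice of the rescue candidate---is routine and visibly polynomial, so the difficulty lies entirely in certifying that the relevant briberies carry zero cost.
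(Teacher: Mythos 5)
Your proposal is correct and follows essentially the same route as the paper's proof: the same key observation that under \DV with \Cdist the only briberies worth performing preserve the voter's top candidate and hence cost nothing, followed by the same enumeration of potential winners via the $n\geq 2^m$ versus $n<2^m$ pigeonhole split and a per-candidate greedy comparison of veto weights. The one substantive difference lies in how the zero-cost claim is justified, and there your version is in fact the more careful one. The paper asserts that for \emph{each} dependent issue, one cp-statement is used to evaluate the top candidate and \emph{another} one determines the vetoed candidate; taken literally this fails once a dependent issue has dependent parents (with $x>\overline{x}$, $x\colon y>\overline{y}$, $\overline{x}\colon\overline{y}>y$, both the top and the bottom candidate set the second issue to $y$, so a third issue depending on it selects the \emph{same} row in both), whereas your lemma makes the claim only for the \emph{earliest} dependent issue, whose parents are necessarily independent because parents precede their children in the order required by \Cdist, and there the argument is airtight; one such top-preserving flip per voter is exactly the operation the algorithm needs to move a veto off a candidate. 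On the residual bookkeeping the two texts are equally informal: deciding per voter which candidates are reachable as the new veto target by free flips (to route displaced vetoes away from $h$ in the \negative case, and onto $h$ in the non-negative preprocessing) is a ``finite per-voter check'' with free flips ``typically available'' in your write-up and ``bribing as many voters \dots as possible'' in the paper's; since the flippable rows are precisely those not used on the top path, a forward sweep along the voter's issue order settles this in polynomial time, and you should state that sweep explicitly rather than appeal to typicality. Your treatment of voters without dependent issues as unbribable, and of voters all of whose free moves land on $h$ as unmovable in the \negative case, matches what the paper's procedure implicitly does.
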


\begin{proof}
Consider an instance of \weigh-$(\OV, \DV, \Cdist)$-\negative-\destBrib. We distinguish two cases. (1) If the disliked candidate $h$ does not get a single veto, he cannot be prevented from winning at all. (2) Otherwise, he gets at least one veto. This case can be subdivided into two sub-cases: (2.1) If there is a candidate without any vetos, there is nothing to do, because $h$ is not a winner. (2.2) But if every candidate gets at least one veto, we might be able to prevent~$h$ from winning. Note that in this case there cannot be more candidates than voters, so $n\geq 2^m$ must hold. Therefore we can try to make each of the candidates a winner. This can easily be done by bribing as many voters to cast their veto against~$h$ as possible. After that we try for each candidate (except~$h$) to take as many vetos from this candidate as possible. Since we are only allowed to bribe dependent issues, this will involve no costs with the cost scheme $\Cdist$, because each voter can only be bribed such that the top candidate remains the same. This is due to the fact that one cp-statement of each dependent issue is needed to evaluate the top candidate and another cp-statement is used to determine the least favorite candidate (who will get the veto). So just flipping the latter will result in a different candidate getting the veto, but will preserve the top candidate. Note that it is possible to bribe a voter in more than just one dependent issue, but in our case it is not necessary. 

This procedure can be used to solve the non-negative case, too. We only need to bribe as many voters as possible to cast their veto against~$h$. This again involves no costs, and after this preprocessing we can apply the same procedure as above.
\end{proof}

\section{Special Cases: Setting each entry in the cost vector to the same value.}
\label{sec:CostVectorSameValue}

First we argue that if the cost vector does not distinguish different voters and sets each cost to $a$, we can assume each entry to be $1$. This is obviously true as we can just divide the budget and each entry of the cost vector by $a$ which forces us to round all values. It is easy to see that this does not change the instance as we can not spend the budget we lose through rounding. Even a single bribery is more expensive than the budget we lose.

\begin{theorem}\label{thm:opivivdvflip_withoutcostvector}
$(OP, A, \Cflip)$-\textsc{destructive-Bribery} is in \Pclass with bribery action $A\in\{IV, IV+DV\}$, if the cost vector assigns the same value to each voter.
\end{theorem}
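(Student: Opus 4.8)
The plan is to exhibit a polynomial-time algorithm. By the normalisation argument preceding the theorem I assume every entry of the cost vector equals $1$, so under \Cflip{} the cost of a bribery is simply the total number of flipped cp-statements. For a voter $v$ and a candidate $c$ let $d_v(c)$ denote the \emph{flip distance}: the number of issues $X$ for which the cp-statement of $v$ that is active under the parent assignment $c|_{Pa(X)}$ disagrees with $c|_X$. For a compact acyclic CP-net this is computable in polynomial time by one forward sweep, and it is exactly the cost of bribing $v$ to make $c$ its top candidate. Two features of the actions \IV{} and \IDV{} will be used throughout: an acyclic CP-net always has an independent (root) issue, so flipping its active cp-statement changes $v$'s top at cost $1$; hence \emph{any} voter can be moved off its current top for cost exactly $1$. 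Because the theorem concerns the non-negative variant, this move is permitted even for the voters currently voting for $h$ — the crucial difference to the \NP-hard negative case. Under \OP{} the candidate $h$ fails to win iff after bribery some $c\neq h$ has strictly more votes than $h$.

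First I would solve the problem for a \emph{fixed} target $c$. This is a routine greedy computation: voters with top $c$ are free; a voter currently voting $h$ may be redirected to $c$ at cost $d_v(c)$ (improving the margin $s(c)-s(h)$ by $2$) or merely off $h$ at cost $1$ (margin $+1$); any remaining voter may be redirected to $c$ at cost $d_v(c)$ (margin $+1$). Sorting these options and buying the cheapest ones until the margin reaches $1$ yields the minimum cost for this $c$. Two such targets can always be evaluated cheaply and bound the optimum: each of the at most $n$ candidates that are someone's original top (in particular the current runner-up, giving the ``peel $h$ below the runner-up'' strategy), and, for \IV{}, only those targets admissible under the reachability constraint. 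The whole problem is the minimum of this subroutine over all candidates $c\neq h$; comparing that minimum with \budget{} decides the instance.

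I expect the main obstacle to be exactly this minimisation over the $2^m$ candidates. One cannot simply restrict to original top candidates: if, say, every voter currently votes for $h$, the cheapest dethroning sends about half of them to a \emph{common} distance-$1$ neighbour of $h$ that is nobody's top, so non-original targets can be strictly optimal. The plan is to exploit that under \Cflip{} the cost of concentrating a set $S$ of voters on one candidate, $\sum_{v\in S} d_v(c)$, decomposes additively over issues, the contribution of issue $X$ depending only on $c|_X$ and $c|_{Pa(X)}$. Since the CP-nets are compact, each issue has a constant number of parent configurations, so the cheapest common target for a fixed $S$ is found by a dynamic program that sweeps the issues in a common topological order, choosing for each issue the locally best value given each configuration of its parents — the conditional generalisation of taking a coordinate-wise majority. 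The genuine difficulty, which I would flag as the crux, is that the optimal target $c$, the concentrated set $S$, and the number of $h$-voters merely peeled off are mutually dependent. I would decouple them by treating the final score $\ell=s(c)$ of the eventual winner and the residual score of $h$ as outer parameters ranging over $\{0,\dots,n\}$ and, for each pair, running the additive dynamic program to compute the minimum-cost way of placing $\ell$ votes on a single candidate while peeling $h$ down to the required level (an $h$-voter routed to $c$ counting for both). For \IV{} the only change is that, inside the program, a dependent issue of each voter is restricted to the value forced by its forward sweep, since such issues cannot be flipped; this affects the per-issue bookkeeping but not the polynomial running time. Taking the best value over all outer parameters and over the original targets gives the overall minimum, so the instance is decided in time polynomial in $n$ and $m$.
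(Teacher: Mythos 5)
Your setup (flip distance $d_v(c)$, the fixed-target greedy subroutine, and the observation that every acyclic CP-net has an independent root issue, so any voter can be pushed off any current top at cost $1$ under both \IV and \IDV) is sound, and you correctly notice that restricting targets to original top candidates is not enough. But the crux you flag---minimising over the $2^m$ possible winners---is resolved in the paper by a short exchange argument that you missed, and your substitute for it fails. The exchange argument: under \Cflip with identical cost entries it never pays to spend more than one flip on a single voter. Redirecting a voter to a target $c$ at cost $d_v(c)\ge 2$ improves the margin $s(c)-s(h)$ by at most $2$, while the same budget buys $d_v(c)$ single-flip peels of voters currently voting for $h$, each worth $+1$; and if the $h$-voters run out, then $s(h)=0$ and any peeled voter's new top already beats $h$, so the instance is a yes-instance anyway. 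Hence, without loss of generality, every bribed voter is bribed by exactly one flip, so the eventual winner receives each of its votes either originally or via a single flip and thus lies in the polynomial-size set of original tops plus one-flip targets (at most $n(m+1)$ candidates). Running your fixed-target subroutine on each of these finishes the proof; no optimisation over exponentially many targets is needed, and this is exactly how the paper argues.

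Your dynamic program, by contrast, has two concrete defects. First, the theorem is stated for arbitrary compact acyclic profiles: \OP does not come with $\mathcal{O}$-legality (only \SM, \Cdist and parts of \OKeff do in this paper), so ``a common topological order'' of the issues need not exist. Second, and more fundamentally, ``each issue has a constant number of parent configurations'' is true \emph{per voter}, not jointly over the bribed set $S$: the parent set $Pa_v(X)$ varies with $v$, so a sweep computing $\min_c\sum_{v\in S}d_v(c)$ must carry the joint assignment to $\bigcup_{v\in S}Pa_v(X)$, which can span $\Theta(m)$ issues, making the state space exponential. Indeed, this subproblem is the minimisation of a sum of local soft constraints over binary variables: already with one single-parent dependent issue per voter one can impose an arbitrary soft $2$-variable constraint between any pair of issues, so it subsumes \textsc{Max-2-Sat}-type optimisation and is \NP-hard in general; the outer parameterisation over final scores does not remove this. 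So as written, the proposal does not yield a polynomial-time algorithm---the one-flip domination argument is the missing, and necessary, idea.
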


\begin{proof}
For the bribery actions $IV$ and $IV+DV$, each voter can be bribed to vote for at least one different candidate with a cost of~$1$. Note that it is never helping the briber to bribe a voter for more than cost~$1$, because for this cost he can always decrease the score of $h$ by one and/or increase the score of the designated winner. 
Moreover, existence of a successful bribery is equivalent to existence of a successful bribery which has a cost of 1 per voter and where voters are only asked not to vote for~$h$.

There are at most $n\cdot m$ different candidates the briber can bribe voters to vote for. 
Since they are easy to identify one can calculate for every such candidate the costs to let this candidate win by just bribing voters to vote to him or/and to some arbitrary other candidate instead of $h$. The cheapest such bribery is the solution if the budget is sufficient, otherwise there exists none.
\end{proof}

\begin{corollary}\label{cor:opivdvcdist}
$(OP, IV+DV, \Cdist)$-\textsc{destructive-Bribery} is in \Pclass if the cost vector assigns the same value to each voter.
\end{corollary}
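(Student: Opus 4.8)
The plan is to run essentially the same algorithm as in the proof of Theorem~\ref{thm:opivivdvflip_withoutcostvector}, changing only the characterisation of the cost-1 moves. As argued at the beginning of this section, I may assume that every entry of the cost vector equals~$1$. First I would observe that, under $\Cdist$ together with \IDV, the cheapest way to change a voter's vote is to move him to his \emph{second-best} candidate: since \IDV lets the briber realise any candidate as the new top, the $\Cdist$-cost of making candidate~$c$ the top of voter~$v_i$ equals $\mathrm{rank}_i(c)-1$, which is exactly~$1$ for the second-best candidate and at least~$2$ for every other candidate. In particular, a voter currently voting for~$h$ can be bribed away from~$h$ at cost~$1$, and each voter has a unique cost-1 target, computable in polynomial time from the given issue order together with the efficient top-candidate computation for acyclic CP-nets.

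Next I would establish the analogue of the key observation in Theorem~\ref{thm:opivivdvflip_withoutcostvector}: a successful destructive bribery exists if and only if one exists that uses only cost-1 moves. The decisive point is that, for the goal of pushing some candidate strictly above~$h$, a cost-1 move to a second-best candidate dominates any more expensive move in gap-gain \emph{per unit of cost}. Moving a voter $v_i$ with top~$h$ to his second-best lowers the score of~$h$ by one and raises another candidate by one, hence increases the gap between that candidate and~$h$ by two at cost~$1$; moving any voter to a candidate of rank $d+1$ costs~$d$ but raises any single such gap by at most two as well. Moreover~$h$ can always be driven to score~$0$ by the $|V_h|$ cost-1 moves of its current supporters $V_h$, after which any candidate retaining a vote already beats~$h$, so no move of cost exceeding~$1$ is ever needed. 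I expect the main obstacle to be the careful treatment of ties when one rebalances an arbitrary optimal solution into a cost-1-only one — removing an expensive boost from the intended winner may drop it into a tie with~$h$ — but this is handled exactly as in the $\Cflip$ case, using that each reroute of a voter to his second-best preserves the effect on~$h$ while still creating a challenger.

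Finally I would enumerate the designated winner. Because only cost-1 moves are used, the final winner $c^{\ast}\neq h$ must either already receive a vote or be the second-best candidate of some voter, so there are at most $2n$ relevant candidates, which can be listed explicitly. For each candidate $c^{\ast}$ I would compute the minimum number of cost-1 moves making its score exceed that of~$h$ by a simple greedy: first spend the supporters of~$h$ whose second-best is~$c^{\ast}$ (each such move changes the gap by~$2$), then the remaining supporters of~$h$ and the non-supporters whose second-best is~$c^{\ast}$ (each changing the gap by~$1$), stopping as soon as~$c^{\ast}$ overtakes~$h$. Taking the cheapest feasible $c^{\ast}$ and comparing its cost with the budget~\budget decides the instance. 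Since each of the $\mathcal{O}(n)$ candidates is processed in $\mathcal{O}(n)$ time after an $\mathcal{O}(n)$ preprocessing of the second-bests, the overall procedure runs in polynomial time, placing the problem in~\Pclass.
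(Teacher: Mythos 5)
Your proposal is correct and follows essentially the same route as the paper: the paper's proof also reduces to the argument of Theorem~\ref{thm:opivivdvflip_withoutcostvector}, observing that under $\Cdist$ each voter has exactly one cost-$1$ target, namely his second-best candidate obtained by bribing the least important issue. Your rank-based cost characterisation, the dominance of cost-$1$ moves, and the enumerate-and-greedy step merely spell out in detail what the paper leaves implicit by reference.
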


\begin{proof}
The same argument used in Theorem~\ref{thm:opivivdvflip_withoutcostvector} holds here. There is always exactly one candidate the briber can bribe a voter to vote for instead of $h$ with costs $1$ (by bribing the least important issue). 
\end{proof}

\begin{corollary}\label{cor:opivdvclevel}
$(OP, IV+DV, \Clevel)$-\textsc{destructive-Bribery} is in \Pclass if the cost vector assigns the same value to each voter.
\end{corollary}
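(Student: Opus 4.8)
The plan is to reduce this to the argument already established for Theorem~\ref{thm:opivivdvflip_withoutcostvector}, exactly as Corollary~\ref{cor:opivdvcdist} does for $\Cdist$. By the normalisation at the beginning of this section we may assume that every entry of the cost vector equals $1$. The only cost-scheme-dependent ingredient of that proof is the cost of a single vote-changing bribe, so the first step is to verify that under $\Clevel$ together with the bribery action $\IDV$ this cost is again exactly $1$.

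For this I would recall from the definition of $\Clevel$ that flipping a single issue $X_j$ costs $k+1-\mathrm{level}(X_j)$, where $k$ is the number of levels of the CP-net; an issue at the maximum level $k$ therefore costs $k+1-k=1$. Every acyclic CP-net attains its maximum level at some issue, and such an issue must be a leaf of the dependency graph, since any child would have strictly larger level. As $\IDV$ permits bribing dependent issues, flipping the active cp-statement of such a leaf is always available, costs $1$, and (because the leaf has no children) alters exactly one coordinate of the voter's top candidate; hence it changes the voter's vote to a candidate differing from the original top candidate in a single issue.

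With this in hand the remaining steps copy Theorem~\ref{thm:opivivdvflip_withoutcostvector}. I would argue that spending more than cost $1$ on any single voter never helps: a cost-$1$ flip of a maximum-level issue already lets the briber remove one vote from $h$ (when applied to an $h$-voter) or add one vote to a designated winner, and no bribe, however expensive, achieves more than removing one vote from $h$ and adding one to the winner. It therefore suffices to consider briberies of cost $1$ per voter. Each voter can be steered with cost $1$ to at most one candidate per maximum-level issue, hence to at most $m$ candidates, so at most $n\cdot m$ candidates can serve as the new winner, together with the at most $n$ candidates already receiving a vote. For each such potential winner one computes in polynomial time the minimum cost to let it overtake $h$ using only cost-$1$ moves, and accepts the instance iff the cheapest of these stays within $\budget$.

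The main obstacle is justifying that restricting to cost-$1$ moves --- equivalently, to flips of leaf issues --- loses no generality; this is precisely the capping argument of Theorem~\ref{thm:opivivdvflip_withoutcostvector}. It is also the point where the hypothesis $\IDV$ is indispensable: with $\IV$ alone the cheapest available flip is that of an independent issue, at cost $k$, which destroys the cost-$1$ structure, and indeed $(OP,\IV,\Clevel)$ is \NP-complete by Theorem~\ref{thm:negativebase}.
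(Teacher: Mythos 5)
Your proposal is correct and follows essentially the same route as the paper: the paper's proof of Corollary~\ref{cor:opivdvclevel} likewise just invokes the argument of Theorem~\ref{thm:opivivdvflip_withoutcostvector} and observes that under \Clevel there is always some candidate a voter can be bribed to vote for at cost~$1$ (a dependent issue unless all issues are independent), which is exactly your maximum-level-issue-is-a-leaf observation spelled out in more detail. Your only slip is peripheral: citing Theorem~\ref{thm:negativebase} for \NP-completeness of $(\OP,\IV,\Clevel)$ conflates the negative case (with identical cost vector) with the non-negative case at hand, but this remark plays no role in the proof itself.
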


\begin{proof}
Again, the same argument used in Theorem~\ref{thm:opivivdvflip_withoutcostvector} holds. There is always at least one candidate the briber can bribe a voter to vote for instead of $h$ with costs $1$ (this will be a depending issue as long as the voter does not only have independent issues). 
\end{proof}

\begin{theorem}\label{thm:okivdvcflip}
$(\OKeff, IV+DV, \Cflip)$-\textsc{destructive-Bribery} is in \Pclass{} for $\mathcal{O}$-legal profiles if the cost vector assigns the same value to each voter.
\end{theorem}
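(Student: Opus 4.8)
The plan is to reduce the given instance to an instance of $(\OP,\IDV,\Cflip)$-\textsc{destructive-Bribery} with a uniform cost vector and then invoke Theorem~\ref{thm:opivivdvflip_withoutcostvector}. Since we are dealing with an $\mathcal{O}$-legal profile, the relevant sub-case of $\OKeff$ is $k=2^j$, and here the observation of Mattei~\etal~\cite[Lemma~1]{mattei2013bribery} (already used in the proof of Theorem~\ref{thm:negativebase:p}) applies: only the first $m-j$ issues in the global order $\mathcal{O}$ are relevant for winner determination, and the last $j=\log k$ issues can be removed. Concretely, every voter votes for exactly the $2^j$ candidates that agree with his most preferred assignment on the first $m-j$ issues and range over all assignments to the last $j$ issues. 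Restricting each CP-net to its first $m-j$ issues is well defined because, by $\mathcal{O}$-legality, none of these issues depends on any of the removed ones; the resulting net is acyclic and determines a unique top candidate on the first $m-j$ issues, so it is a valid $\OP$ instance.

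First I would set up the correspondence between the two instances. Let $h'$ denote the restriction of the hated candidate $h$ to the first $m-j$ issues. A voter votes for $h$ under $\OKeff$ if and only if his top assignment on the first $m-j$ issues equals $h'$, i.e.\ if and only if he votes for $h'$ in the restricted $\OP$ election; moreover all full candidates sharing $h'$ on the first $m-j$ issues receive identical scores, so $h$ lies in the $\OKeff$-winner set exactly when $h'$ is an $\OP$-winner of the restricted election. Hence preventing $h$ from winning is equivalent to preventing $h'$ from winning in the restricted instance.

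Next I would argue that the budget and the $\Cflip$ costs carry over unchanged. A bribery that flips cp-statements only among the removed last $j$ issues merely permutes the block of $2^j$ candidates a voter already votes for and therefore leaves every effective vote, and hence the whole outcome, unchanged; since such flips cost money under $\Cflip$ but achieve nothing, an optimal briber never performs them. Any flip in one of the first $m-j$ issues costs the same in the full and in the restricted net, because the cp-statements of those issues are identical in both (again by $\mathcal{O}$-legality these issues depend only on issues among the first $m-j$). Consequently a bribery of total cost at most $\budget$ prevents $h$ from winning in the original instance if and only if there is one of cost at most $\budget$ preventing $h'$ from winning in the restricted $\OP$ instance, keeping the same uniform cost vector.

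The result then follows by applying Theorem~\ref{thm:opivivdvflip_withoutcostvector} to the restricted $(\OP,\IDV,\Cflip)$ instance, whose polynomial-time algorithm decides solvability. I expect the main obstacle to lie in the bookkeeping of the two middle steps rather than in any deep idea: one has to verify carefully that removing the last $j$ issues neither creates nor destroys a successful bribery---in particular that flips in removed issues are genuinely useless and that the flip cost of each retained issue is preserved---so that running the algorithm of Theorem~\ref{thm:opivivdvflip_withoutcostvector} on the restricted instance correctly decides the original one.
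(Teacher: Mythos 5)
Your proposal covers only one of the two sub-cases of $\OKeff$ and thereby misses the actual content of the theorem. You assert that for an $\mathcal{O}$-legal profile ``the relevant sub-case of $\OKeff$ is $k=2^j$'', but the definition of $\OKeff$ is a disjunction: either $k$ is polynomial in $n$ and $m$ (for arbitrary profiles, hence in particular for $\mathcal{O}$-legal ones), or $k$ is a power of $2$ and the profile is $\mathcal{O}$-legal. The power-of-two case is handled in the paper by a \emph{separate} statement, Theorem~\ref{thm:okivivdvcflipkunbounded}, whose proof is exactly your reduction (drop the last $j$ issues and run the $\OP$ algorithm of Theorem~\ref{thm:opivivdvflip_withoutcostvector}). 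Theorem~\ref{thm:okivdvcflip}, by contrast, must also deal with $k$ polynomial in $n$ and $m$ but \emph{not} a power of $2$, say $k=3$; the $\mathcal{O}$-legality hypothesis is there so that the candidate adjacent to $h$ is the same for all voters (compare the Comment after Theorem~\ref{thm:okivdvcdist}), not to force $k=2^j$. For such $k$ your reduction breaks at its first step: a voter's top-$k$ set is not of the form ``fixed assignment to the first $m-j$ issues combined with all assignments to the last $j$ issues'', so there is no restricted $\OP$ election equivalent to the given one, and your claimed correspondence ``a voter votes for $h$ iff his restricted top candidate is $h'$'' is false.

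The paper's proof argues directly for general $k$. Let $c^*$ be the candidate agreeing with $h$ on all issues except the last one in the common order. Since $h$ and $c^*$ differ only in the least important issue, they are adjacent in every voter's induced order, so a voter approves $h$ but not $c^*$ if and only if $h$ has rank exactly $k$ and $c^*$ rank $k+1$; for each such voter a single flip in the last issue swaps the two, decreasing $s(h)$ by one and increasing $s(c^*)$ by one. One then observes that no candidate $c_i \neq c^*$ with $s(c_i) \leq \lfloor s(h)/2 \rfloor$ can be made a winner with fewer flips than $c^*$, so it suffices to compute the cost of making $c^*$, or any candidate with initial score above $\lfloor s(h)/2 \rfloor$, win, and to take the cheapest option, giving a running time in $\bigO{n^2 \cdot m}$. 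Your argument, while essentially correct for $k=2^j$ (where it coincides with the paper's proof of Theorem~\ref{thm:okivivdvcflipkunbounded}), does not prove the statement at hand.
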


\begin{proof}
Let $\issue_1\issue_2\dots \issue_{m-1} \issue_{m}$ be the disliked candidate~$h$ and~$c^*$ denote the candidate $\issue_1\issue_2\dots \issue_{m-1},\overline{\issue_m}$. In every instance the score of~$h$ is greater than the score of~$c^*$, 
therefore there has to be at least one voter having~$h$ but not~$c^*$ among his first~$k$ preferred candidates. This can happen if and only if the rank of~$h$ in this voter's induced preference order over all candidates is~$k$, and the rank of $c^*$ is $k+1$.
For each of these voters the briber needs to bribe just one flip on the last issue to reduce the score of~$h$ and, simultaneously, increase the score of~$c^*$ by one. Note that there is no candidate~$c_i\neq c^*$ with $s(c_i)\leq\lfloor\frac{s(h)}{2}\rfloor$, which can be made a winner with less flips than $c^*$. Therefore it is sufficient to calculate the costs to make $c^*$ or any candidate~$c_j$, with initially $s(c_j) > \lfloor\frac{s(h)}{2}\rfloor$, win the election. The cheapest such bribery is the solution, if the budget is sufficient.
Since there are at most~$n$ candidates to check, each by iterating over~$n$ voters with $m$ issues, this results in a running-time in~$\bigO{n^2\cdot m}$.
\end{proof}

\begin{theorem}\label{thm:okivcflip}
$(\OKeff, IV, \Cflip)$-\textsc{destructive-Bribery} is in \Pclass{} if the cost vector assigns the same value to each voter.
\end{theorem}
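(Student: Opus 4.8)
The plan is to normalise, dispose of the trivial case, extract a clean description of how $\IV$-flips act on a voter's ranking, and then split $\OKeff$ into its two admissible regimes. As at the start of this section, I would assume the cost vector is all-ones, so that each flipped cp-statement costs exactly $1$. If $h$ is not already a (co-)winner the instance is trivially positive; otherwise assume $h$ is a winner, so preventing $h$ means making $s(c)>s(h)$ for some candidate $c\neq h$, where $s(\cdot)$ is the $\OKeff$-score, i.e.\ the number of voters having the candidate among their top $k$.

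The structural observation I would establish first is that, for a voter whose CP-net is legal for the given order $\pi=(\pi_1,\dots,\pi_m)$, the induced total order is lexicographic: writing $b_i(c)\in\{0,1\}$ for the indicator that $c$ takes the dispreferred value at $\pi_i$ given $c$'s assignment to the more important issues, the rank of $c$ is $1+\sum_i b_i(c)\,2^{m-i}$. The key point is that an $\IV$-flip of the independent issue at position $i$ toggles only $b_i(\cdot)$, and does so simultaneously for every candidate, because flipping an independent issue's single cp-statement does not touch the cp-tables of its descendants. Two consequences drive the algorithm: the most important issue $\pi_1$ is always independent (it has no parents), and if a voter has $h$ among its top $k$ with $k<2^{m-1}$ then necessarily $b_1(h)=0$, so a single flip of $\pi_1$ pushes $h$ into the bottom half, hence out of the top $k$, at cost $1$.

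I would then separate the two regimes permitted by $\OKeff$. In the power-of-two regime ($k=2^j$ on an $\mathcal{O}$-legal profile) I would invoke Mattei \etal \cite[Lemma 1]{mattei2013bribery} to delete the $j$ least important issues: bribing them is useless and independence is preserved under $\mathcal{O}$-legality, so the instance collapses to a $(\OP,\IV,\Cflip)$ instance with all-ones cost vector on $m-j$ issues, which lies in \Pclass{} by Theorem \ref{thm:opivivdvflip_withoutcostvector}. In the polynomial-$k$ regime, if $2^m$ is polynomially bounded I would brute-force over all candidates; otherwise $k<2^{m-1}$ and the single-flip removal applies. In this last case I would fix a target witness $c$ and reduce to maximising the gap $\sum_v\big([c\in\text{top-}k]_v-[h\in\text{top-}k]_v\big)$ from its current value ($\le 0$) up to $\ge 1$. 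Crucially, since $h$ fails to win as soon as any one candidate beats it, the scores of candidates other than $c$ are irrelevant, so per voter I only track the pair $([c]_v,[h]_v)$. Using the binary description, the minimum number of independent-issue toggles realising each of the four target pairs is a greedy optimisation over bit positions weighted by powers of two, yielding a small per-voter cost table; a global greedy then buys the cheapest gap-increments until the gap reaches $1$ and compares the total against $\budget$. Trying each $c$ among the at most $kn$ vote-receiving candidates gives a polynomial algorithm.

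I expect the main obstacle to live entirely in this last case, and to be twofold. First, the per-voter optimisation is coupled: one toggled subset $S$ moves the bits of $c$ and of $h$ at once, so realising, say, $[c]=1,[h]=0$ forces the weighted binary value of $c$ below $k$ while pushing that of $h$ to at least $k$; I would classify independent positions by the pair $(b_i(h),b_i(c))$ and exploit that the weights are distinct powers of two, but the feasibility bookkeeping (including the infeasible cases) needs care. Second, and more delicate, is arguing that restricting the witness $c$ to the $\le kn$ vote-receiving candidates loses nothing: a fresh candidate with no initial votes needs at least $s(h)+1$ paid additions, and since voters are bribed independently there is no economy of scale, so a highest-scoring runner-up ought to dominate—but making this exchange argument rigorous, especially when $\budget$ is large enough that low-scoring candidates become a priori viable, is the crux. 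The bit-toggle lemma is precisely what keeps both steps polynomial.
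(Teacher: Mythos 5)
Your structural groundwork is sound and matches the paper's: the lexicographic rank formula, the fact that the most important issue is necessarily independent (hence bribable under $\IV$), and the observation that for $k<2^{m-1}$ a single flip of the top issue evicts $h$ from a voter's top-$k$ at cost $1$ are exactly the ingredients the paper uses; your power-of-two regime likewise reduces to $\OP$ just as the paper does (there in a separate theorem, Theorem~\ref{thm:okivivdvcflipkunbounded}). The genuine gap sits exactly where you locate the crux, and it is not merely delicate---it is fatal as stated: restricting the witness $c$ to the at most $kn$ initially vote-receiving candidates loses completeness, and the exchange argument you hope for (``a highest-scoring runner-up ought to dominate'') does not exist. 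Concretely, take $k=1$ and $n$ voters with identical, fully independent CP-nets whose top candidate is $h$: the only vote-receiving candidate is $h$ itself, so your enumeration has no admissible witness and answers NO, yet with $\budget\geq\lceil (n+1)/2\rceil$ the briber can flip one issue for that many voters and make a Hamming-distance-$1$ neighbour $c$ of $h$ the winner, since $s'(c)=\budget>n-\budget=s'(h)$. Your dismissal of fresh candidates via ``at least $s(h)+1$ paid additions'' overlooks precisely the coupling you noticed elsewhere: one unit-cost flip simultaneously adds a vote to such a neighbour and removes one from $h$ (for a top-issue flip the voter's new top-$k$ is even disjoint from the old one), so a zero-score candidate needs only more than $s(h)/2$ bribes, which a large budget affords.

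The paper closes this hole with a different witness set. It first argues that it suffices to bribe only voters currently voting for $h$, each with exactly one flip (a cost-$1$ bribe that lowers $s(h)$, and possibly raises some $s(c)$, dominates any costlier action), and then enumerates as potential winners the at most $m\cdot k\cdot n$ candidates that can \emph{enter} some $h$-voter's top-$k$ via a single flip, plus the current runner-up; for each such candidate $c_i$ it counts the number $a_i$ of $h$-voters whom one flip sends to $c_i$ while dropping $h$, and tests an arithmetic condition in $s(h)$, $s(c_i)$, $a_i$ and $\budget$. This single-flip dominance also makes your multi-flip per-voter cost tables superfluous---which is just as well, since the greedy you assert for minimising the number of toggles subject to the two coupled threshold constraints on the ranks of $c$ and $h$ is itself unproven. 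In short: your normalisation, regime split and single-flip lemma are correct and aligned with the paper, but until the vote-receiving witness set is replaced by the one-flip-reachable witness set (plus runner-up), the algorithm in your main case is unsound.
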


\begin{proof}
If $k$ is polynomial in $n$ and $m$, each voter initially voting for $h$ can be bribed to not vote for $h$ by just flipping one issue. This can be the most important one, but some other issue as well. However, the most important one is always an independent issue and therefore bribable. So it is sufficient to bribe just voters initially voting for $h$ with exactly one flip. There are at most $m\cdot k$ candidates which can get an additional vote from one voter, so at most $m\cdot k \cdot n$ candidates can beat $h$ after the bribery (plus possibly the candidate with the second most number of votes before the bribery).

Note that it is not necessary to try each of those candidates to win. Just track for each such candidate how many voters voting for $h$ can be bribed to vote for him with only one flip. One can do so by iterating over the voter voting for $h$ and increasing a counter for each candidate the voter can be bribed to vote for by only one flip.  Let $a_i$ be the number of such voters who can be bribed to vote for the candidate $c_i$. If there is a candidate with $s(h) < s(c_i)+a_i+\budget$, this candidate can be made a winner by bribing $a_i$ voters to vote for $c_i$ but not for $h$ and additional $\budget-a_i$ voters  not to vote for $h$. If there is no such voter, it is a no-instance. 
\end{proof}

\begin{theorem}\label{thm:okivdvcdist}
$(\OKeff, IV+DV, \Cdist)$-\textsc{destructive-Bribery} is in \Pclass{} for $\mathcal{O}$-legal profiles if the cost vector assigns the same value to each voter.
\end{theorem}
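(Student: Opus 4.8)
The plan is to exploit $\mathcal{O}$-legality to reduce every instance to an equivalent \OP{} instance and then invoke Corollary~\ref{cor:opivdvcdist}. First I would normalise the cost vector to the all-ones vector, as argued at the beginning of this section, and dispose of the trivial case: if $h$ is not already a winner of the unbribed profile, the instance is a yes-instance at cost $0$. So from now on assume $h$ is a winner, i.e.\ $s(h)\geq s(c)$ for every candidate $c$, and the task is to raise some candidate strictly above $h$.

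For the efficient sub-case $k=2^j$ with the global order $\mathcal{O}$, I would use the observation of Mattei~\etal~\cite[Lemma~1]{mattei2013bribery} that the last $j$ issues are irrelevant: each voter approves exactly the block of $2^j$ candidates sharing its preferred assignment on the first $m-j$ issues, so the election becomes an \OP{} election over the $2^{m-j}$ prefixes. The point to verify is that, because the profile is $\mathcal{O}$-legal, each such block occupies $2^j$ \emph{consecutive} positions in every voter's induced (lexicographic) order; hence the $\Cdist$-cost of moving a voter's top to a prefix that is its $t$-th block equals exactly $(t-1)\cdot 2^j$, i.e.\ $2^j$ times the reduced $\Cdist$-cost on the first $m-j$ issues. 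Since every move cost is thus a multiple of $2^j$, the instance is equivalent to $(\OP,\IDV,\Cdist)$-\destBrib{} with uniform cost vector and budget $\lfloor\budget/2^j\rfloor$, which Corollary~\ref{cor:opivdvcdist} solves in polynomial time.

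For the remaining sub-case, where $k$ is polynomial in $n$ and $m$, the block collapse is unavailable, but each voter's top-$k$ list is computable in polynomial time via Brafman~\etal~\cite{brafman2010finding}, so at most $kn$ candidates receive a vote. Mirroring the proof of Theorem~\ref{thm:okivdvcflip}, I would single out $c^*$, the candidate obtained from $h$ by flipping its least significant issue, and argue that it is the cheapest target to promote: a voter ranking $h$ exactly at position $k$ can, by one minimal-cost bribe, simultaneously drop $h$ out of and lift $c^*$ into its top-$k$. A halving argument then shows that besides $c^*$ only the (polynomially many, since they receive a vote) candidates with current score above $\lfloor s(h)/2\rfloor$ can possibly be made winners more cheaply, so only polynomially many targets need be tried. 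For each target I would, for every voter, compute the cost of the cheapest $\Cdist$-bribe that removes $h$ from and/or inserts the target into its top-$k$, sort these per-voter costs, and greedily add the cheapest ones until the target strictly beats $h$; the minimum over targets is then compared to the budget.

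The hard part will be this last sub-case. Unlike in \OP{}, a single $\Cdist$-bribe rewrites the entire top-$k$ window, so I must (a) show that for each voter the cheapest bribe realising a prescribed effect on $h$ and on the target is polynomial-time computable, and (b) justify that greedily combining these per-voter minima is globally optimal, i.e.\ that the costs are separable across voters. The combinatorial halving lemma bounding the set of sensible targets — which needs a verification that promoting any candidate of score at most $\lfloor s(h)/2\rfloor$ is strictly more expensive than promoting $c^*$ — is the other step requiring care. In the power-of-two sub-case, by contrast, the only subtlety is the exact factor-$2^j$ correspondence of costs, which is immediate from the consecutiveness of blocks in an $\mathcal{O}$-legal order.
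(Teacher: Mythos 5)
Your power-of-two sub-case is correct, and it is in fact a different route from the paper's: the paper treats that case in a separate statement (Theorem~\ref{thm:okivivdvcdistkunbounded}) by a direct greedy argument bribing voters on issue $\Issue_{m-j}$, whereas your block-scaling reduction to Corollary~\ref{cor:opivdvcdist} works because, under $\mathcal{O}$-legality, the $2^j$-blocks are consecutive in every voter's induced order, making a candidate of the $t$-th block the top costs at least $(t-1)\cdot 2^j$ with equality for the block's best candidate, and every useful cost is thus a multiple of $2^j$, so flooring the budget is sound. The trouble is in the sub-case the theorem is really about, namely $k$ polynomial in $n$ and $m$.

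There you have overlooked the zero-cost briberies that \Cdist{} admits under \IDV: the cost of a bribe is the number of candidates previously ranked above the \emph{new top candidate}, so flipping a dependent cp-statement in one of the last $\lceil\log k\rceil$ issues whose context does not match the voter's top candidate costs nothing, yet it rewrites the voter's top-$k$ window. This breaks your sketch in two places. First, your halving cutoff is applied to \emph{current} scores (``only candidates with current score above $\lfloor s(h)/2\rfloor$, which receive a vote, need be tried''): a candidate with current score $0$ may be boostable \emph{for free} to a score close to $s(h)$ and then be strictly cheaper to promote than $c^*$, so the algorithm as proposed would answer no on such yes-instances. The paper's proof is built around exactly this phenomenon: it first performs a preprocessing step executing all free bribes that remove $h$ from top-$k$ windows, then defines $s_{\max}(c_i)$, the maximum score reachable by free flips, enumerates the at most $n\cdot(2^{\lceil\log k\rceil}+1)$ candidates with $s_{\max}(c_i)>0$ (polynomially many precisely because $k$ is polynomial), and applies the $\lfloor s(h)/2\rfloor$ threshold to $s_{\max}$, not to the initial score. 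Second, your claim that a voter holding $h$ at rank $k$ and $c^*$ at rank $k+1$ swaps them ``by one minimal-cost bribe'' of cost $1$ is only justified after that preprocessing: if the cp-statement governing the $h$/$c^*$ pair is not the one of the voter's top context, the flip leaves the top candidate unchanged and is free, which upsets your cost accounting. Your admitted open points (a) and (b) on per-voter minima and separability are real but are glossed over at essentially the same level of detail in the paper; the missing free-bribery/$s_{\max}$ idea, by contrast, is a concrete error that makes your target enumeration, and hence the algorithm, incorrect.
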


\begin{proof}
We prove Theorem~\ref{thm:okivdvcdist} with a similar idea as in the proof of Theorem~\ref{thm:okivdvcflip}. First we preprocess the instance: whenever a voter voting for~$h$ is given who can be bribed no to vote for~$h$ without costs, we bribe him in this way, as this is never wrong. Note that this can be done efficiently, because there are at most $l=\lceil\log k\rceil$ many issues that may be considered for this, so even a brute-force approach would be doable in polynomial time. 

Let the candidate $\issue_1\issue_2\dots\issue_{m-1}\overline{\issue_m}$ be denoted by~$c^*$, whereas the disliked candidate~$h$ is $\issue_1\issue_2\dots\issue_{m-1}\issue_{m}$. In every instance 
the score of $h$ is greater than the score of $c^*$, 
therefore there has to be at least one voter ranking~$h$ but not~$c^*$ among his first~$k$ preferred candidates. This can happen if and only if~$h$ is on rank~$k$ and~$c^*$ on rank~$k+1$. For each of these voters the briber needs to bribe just one flip on the last issue to reduce the score of~$h$ and, simultaneously, increase the score of~$c^*$ by one. This flip will always result in a change of the top candidate, which would imply a cost of $1$, because the top two candidates will swap their position.

Note that, depending on~$k$, one might change the score of certain candidates without spending budget, because the top candidate stays the same. This can only happen due to flips in the last $l$ issues, and there only with dependent ones. Therefore there are at most $n\cdot 2^l$ such candidates, whose scores can be raised without both paying anything and raising the score of $h$. Let $s_\text{max}(c_i)$ denote the maximum score the candidate $c_i$ can reach by such a bribery.

There is no candidate~$c_i\neq c^*$ with $s_\text{max}(c_i)\leq\lfloor\frac{s(h)}{2}\rfloor$ which can be made a winner with less costs than $c^*$. Therefore it is sufficient to calculate the costs to make $c^*$ or any candidate~$c_j$, with initially $s_\text{max}(c_j) > \lfloor\frac{s(h)}{2}\rfloor$, win the election. The cheapest such bribery is the solution, if the budget is sufficient.
Since there are less than $n\cdot (2^l +1)$ candidates $c_i$ with $s_\text{max}(c_j)>0$, there are at most~$n\cdot (2^l +1)$ candidates to check, each by iterating over~$n$ voters with $m$ issues, resulting in a running-time in~$\bigO{n^2\cdot (2^l +1)\cdot m}$.
\end{proof}

\emph{Comment.} $\mathcal{O}$-legality of the profile is needed for Lemma~\ref{thm:okivdvcdist}, because otherwise there is not necessarily a unique~$c^*$ which is next to~$h$ in the order of candidates for each voter.

\begin{theorem}\label{thm:okivivdvcflipkunbounded}
$(\OKeff, A, \Cflip)$-\textsc{destructive-Bribery} is in \Pclass{} with bribery action $A\in \{IV, IV+DV\}$ for $\mathcal{O}$-legal profiles and $k$ being a power of~$2$, if the cost vector assigns the same value to each voter.
\end{theorem}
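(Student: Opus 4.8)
The plan is to reduce the problem to the One-Step-Plurality case already settled in Theorem~\ref{thm:opivivdvflip_withoutcostvector}. The central observation is that for $k=2^j$ together with a global order over the issues, the rule $\OKeff$ collapses to One-Step-Plurality on the first $m-j$ issues: by Mattei~\etal~\cite[Lemma~1]{mattei2013bribery}, as recalled in the preliminaries, each voter's top $2^j$ candidates all agree on the first $m-j$ issues and range over every assignment of the last $j$ issues, and every one of them receives a point. Hence the score a candidate obtains under $\OKeff$ equals the number of voters whose preferred assignment on the first $m-j$ issues coincides with that candidate's restriction to those issues. Writing $h'$ for the restriction of the disliked candidate $h$ to the first $m-j$ issues, the candidate $h$ is a (co-)winner if and only if $h'$ is a plurality winner on the first $m-j$ issues.

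First I would form the projected instance: keep only the first $m-j$ issues of $\mathcal{O}$ for every voter, obtaining a CP-net on $m-j$ issues, and declare $h'$ to be the disliked candidate of an $(\OP, A, \Cflip)$ instance with the same (identical-entry) cost vector. This projection is well defined because $\mathcal{O}$-legality forces every one of the first $m-j$ issues to depend only on issues preceding it in $\mathcal{O}$, hence only on issues among the first $m-j$; the retained cp-statements therefore form a genuine sub-CP-net, and an issue is independent in the projection exactly when it is independent in the full profile. Consequently the bribery-action restriction is faithfully transported: under $\IV$ only independent issues may be flipped in either instance, and under $\IDV$ all may.

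Next I would argue that the projection is cost-faithful for $\Cflip$. Flipping any cp-statement among the last $j$ issues merely permutes the $2^j$ candidates inside a voter's winning block---each of which already receives a point---so it never changes which block the voter votes for and only wastes budget. Thus an optimal bribery may be assumed to flip cp-statements only among the first $m-j$ issues, and since those cp-statements are identical in the full and projected instances, the number of flips (the $\Cflip$ cost) is the same in both. Preventing $h$ from winning in the original instance within budget $\beta$ is therefore equivalent to preventing $h'$ from winning in the projected $\OP$ instance within the same budget, and I would close the argument by invoking Theorem~\ref{thm:opivivdvflip_withoutcostvector}, whose algorithm runs in polynomial time; since the projection itself is computable in polynomial time, so is the whole procedure.

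The main obstacle is the faithfulness of the reduction rather than any algorithmic difficulty: one must verify simultaneously that scores (hence the winner set) are preserved under projection, that flips of the discarded last $j$ issues are genuinely useless under $\Cflip$, and that the $\IV$/$\IDV$ distinction survives the projection. All three rely on $\mathcal{O}$-legality and on the block structure of the top $2^j$ candidates; once these are established, the reduction to the already-solved $\OP$ case is immediate.
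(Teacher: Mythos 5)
Your proposal is correct and takes essentially the same route as the paper: the paper's (much terser) proof likewise observes that for $k=2^j$ bribing the last $j$ issues never helps the briber, so the instance collapses onto the first $m-j$ issues and is handled by the $\OP$ algorithm of Theorem~\ref{thm:opivivdvflip_withoutcostvector}. The details you spell out---well-definedness of the projection under $\mathcal{O}$-legality, preservation of the $\IV$/$\IDV$ distinction, and $\Cflip$-cost faithfulness---are precisely the steps the paper leaves implicit.
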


\begin{proof}
With $k=2^j$ it is never helping the briber to bribe one of the last $j$ issues. Therefore a successful bribery makes a candidate win that differs from $h$ in at least one of the first $m-j$ issues. Such a candidate can be found in polynomial running-time with the algorithm given in the proof of Theorem~\ref{thm:opivivdvflip_withoutcostvector}, from the voting protocol $OP$.
\end{proof}

\begin{theorem}\label{thm:okivivdvcdistkunbounded}
$(\OKeff, IV+DV, \Cdist)$-\textsc{destructive-Bribery} is in \Pclass{}, when $k$ is a power of~$2$, if the cost vector assigns the same value to each voter.
\end{theorem}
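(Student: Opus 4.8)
The plan is to follow the same reduction-to-$\OP$ strategy as in the proof of Theorem~\ref{thm:okivivdvcflipkunbounded}, adapted to the cost scheme $\Cdist$. Writing $k = 2^j$, I would first invoke the structural observation of Mattei~\etal~\cite[Lemma~1]{mattei2013bribery} that already underlies winner determination in the $\OKeff$ setting: for an $\mathcal{O}$-legal profile and a global order $\mathcal{O}$ of the issues, every voter casts a vote for exactly the $2^j$ candidates that agree with his top candidate on the first $m-j$ issues. Consequently the \emph{vote set} of a voter depends only on the assignment to those first $m-j$ issues, and bribing any of the last $j$ issues merely permutes the candidates inside a voter's block without changing which candidates receive a point. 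Since a destructive bribery only needs to alter scores, such a bribery of the last $j$ issues can never help the briber and can only waste budget; I may therefore restrict attention to briberies that touch the first $m-j$ issues.

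The second step is to set up the cost correspondence, which is the only place where $\Cdist$ behaves differently from the $\Cflip$ case of Theorem~\ref{thm:okivivdvcflipkunbounded}. Because $\Cdist$ orders the candidates lexicographically along $\mathcal{O}$, the $2^m$ candidates split into $2^{m-j}$ contiguous blocks of size $2^j$, one per assignment $p$ to the first $m-j$ issues, ordered by the voter's preference over these prefixes. To make a voter vote for the block with prefix $p'$ it suffices, and is cheapest, to move his top candidate to the best candidate carrying prefix $p'$, which sits at position $1 + 2^j \cdot r(p')$ where $r(p')$ denotes the number of prefixes the voter ranks above $p'$. Hence the $\Cdist$ cost of switching a voter to prefix $p'$ equals $2^j \cdot r(p')$, that is, exactly $2^j$ times the $\Cdist$ cost of the same switch in the profile obtained by deleting the last $j$ issues. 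By $\mathcal{O}$-legality no issue among the first $m-j$ depends on a deleted one, so this restricted profile is again a well-defined family of acyclic CP-nets.

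Combining the two steps, I would show that the given instance is a yes-instance if and only if the $(\OP, \IDV, \Cdist)$-\textsc{destructive-Bribery} instance on the first $m-j$ issues, with $h$ replaced by its prefix and with budget $\lfloor \budget / 2^j \rfloor$, is a yes-instance: each voter's single $\OP$-vote is his top prefix, $h$ fails to win iff some prefix strictly outscores the prefix of $h$, and every relevant bribery cost is divided by the uniform constant $2^j$ (which preserves the assumption of a constant cost vector, the residual budget loss from rounding being smaller than any single bribery cost). This reduced instance is precisely the setting of Corollary~\ref{cor:opivdvcdist}, which, via the enumeration argument of Theorem~\ref{thm:opivivdvflip_withoutcostvector}, solves $(\OP, \IDV, \Cdist)$-\textsc{destructive-Bribery} with a constant cost vector in polynomial time; the ranks $r(p')$ needed to evaluate the costs are themselves computable in polynomial time by Mattei~\etal~\cite[Theorem~3]{mattei2013bribery}.

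I expect the only genuinely delicate point to be the cost bookkeeping of the second paragraph---checking that restricting to the first $m-j$ issues scales every $\Cdist$ cost by precisely the factor $2^j$ and that this uniform factor can be absorbed into the budget---whereas the reduction itself, together with the uselessness of bribing the last $j$ issues, follows immediately from $\mathcal{O}$-legality and the cited block structure, exactly as in Theorem~\ref{thm:okivivdvcflipkunbounded}.
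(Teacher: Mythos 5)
Your proposal is correct, but it takes a genuinely different route from the paper's own proof. The paper argues directly and greedily: since bribing any of the last $j$ issues is useless when $k=2^j$, the cheapest score-changing bribery is a flip in issue $\Issue_{m-j}$ at $\Cdist$-cost $2^j$, the next option ($\Issue_{m-j-1}$) already costs twice as much, and so an optimal strategy bribes voters initially voting for~$h$ in issue $\Issue_{m-j}$ (all of whom then land in the same adjacent block) until $h$ loses or the budget is exhausted. You instead lift the reduction-to-$\OP$ scheme that the paper uses for $\Cflip$ in Theorem~\ref{thm:okivivdvcflipkunbounded}: delete the last $j$ issues, observe that the $2^j$-candidate blocks are contiguous in each voter's induced order so that switching a voter to prefix $p'$ costs exactly $2^j$ times the restricted $\Cdist$-cost, and then solve the resulting $(\OP,\IDV,\Cdist)$ instance with budget $\lfloor\budget/2^j\rfloor$ via Corollary~\ref{cor:opivdvcdist}. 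Your route shifts the correctness burden onto an already proven statement, and the one genuinely new ingredient---that every relevant cost scales by the uniform factor $2^j$ (non-block-optimal targets are dominated, so all useful briberies cost multiples of $2^j$, making the floored budget exact)---is spelled out explicitly; working at block level also sidesteps the need for a unique neighbour candidate $c^*$ per voter, the issue raised in the paper's comment after Theorem~\ref{thm:okivdvcdist}. The paper's greedy buys brevity and an explicit algorithm, but leaves unargued why bribing only $h$-voters in $\Issue_{m-j}$ is optimal (it is, since lowering $s(h)$ by one while raising the adjacent block dominates raising any rival by one, but this needs saying); your version avoids that gap entirely. One implicit assumption you share with the paper and should state: each voter's $\Cdist$ order over the issues is taken to coincide with the global order $\mathcal{O}$ underlying the $\OKeff$ evaluation---otherwise the blocks need not be contiguous in a voter's $\Cdist$ order and the factor-$2^j$ correspondence would fail.
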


\begin{proof}
With $k=2^j$ it is never helping the briber to bribe one of the last $j$ issues. So the cheapest reasonable issue to bribe is issue $\Issue_{m-j}$. With $\Cdist$ the next best issue would be issue $\Issue_{m-j-1}$, but this will cost twice as much. Therefore the best strategy is to bribe voters initially voting for $h$ only in issue $\Issue_{m-j}$ until~$h$ loses against some other candidate or until the budget is not sufficient anymore. Note that the voters cannot be chosen at random, but it can be easily checked which one should be started with. 
\end{proof}

\section{Conclusion}\label{sec:conclusion}
We extensively studied destructive bribery for the weighted, unweighted, negative and non-negative variations on all cost-, bribery- and evaluation-schemes introduced by Mattei~\etal~\cite{mattei2013bribery}. Table \ref{table:results} summarizes our results. The main differences can be observed between the weighted and unweighted cases, while the negative and non-negative cases are very similar---we remark that these cases may behave differently, however, if the cost vector assigns the same value to each of the voters. The cost vector is also the tool to mimic the restriction of the negative setting in the non-negative case: with its help, one can make a voter unbribable (one cannot use it to affect the bribery actions though).

It is also interesting to observe that---for an arbitrary cost vector---all combinations in the weighted case for the {\it constructive} bribery problem turned out to be \NP-complete~\cite{dorn2014hardness}, whereas in the destructive setting, we have identified two tractable cases. They occur due to the strange side effect of the combination of $\DV$ and $\Cdist$ that sets all reasonable bribery free of charge. 

The most interesting observation might be that in the unweighted setting, only the combination of $\SM$ and $\Cequal$ was shown to be \NP-complete for constructive bribery~\cite{mattei2013bribery,dorn2014hardness}, while almost half of the combinations for destructive bribery turned out to be computationally hard -  this behavior is rather unusual for voting problems and is due to the combinatorial structure of the set of candidates. If the number of candidates is part of the input (which is the case for many of the common settings for voting problems), the constructive case of a voting problem can be directly used to solve the destructive counterpart: If it is known how to make a designated candidate the only winner of the election, one can simply run this procedure for all candidates and find out which of them is the cheapest solution. 
In the setting of combinatorial domains where the number of candidates is exponential in the size of the input, one cannot simply check which of the exponentially many candidates should be chosen to be made the winner. It might turn out, nevertheless, that the destructive version is not computationally harder, but in our case, we have seen that precluding an alternative might be more difficult than pushing it through.

\bibliographystyle{plain}
\bibliography{literatur}
\end{document}